\theoremstyle{definition}
\newtheorem{Def}{Definition}
\theoremstyle{plain}
\newtheorem{Lem}[Def]{Lemma}
\theoremstyle{plain}
\newtheorem{The}[Def]{Theorem}
\theoremstyle{remark}
\newtheorem{Rem}[Def]{Remark}
\theoremstyle{plain}
\newtheorem{Prop}[Def]{Proposition}
\renewcommand\@maketitle{
\let\footnote\thanks 
\begin{center}
{\normalsize  \@title \par}
\vskip 1em
{\footnotesize \@author \par}
\end{center}
\vspace{-0.3cm}
}
\titleformat*{\section}{\center\normalsize\scshape}
\titlespacing{\section}{0pt}{*4.5}{*1.5}
\title{{ACCUMULATION RATE OF BOUND STATES OF DIPOLES GENERATED BY POINT CHARGES IN STRAINED GRAPHENE}}
\author{FLORIAN DORSCH\footnote{Mathematisches Institut, LMU M{\"u}nchen, Theresienstr. 39, 80333 Munich, Germany\newline \texttt{florian.dorsch@gmx.org}}}
\begin{document}

\maketitle

\begin{abstract}
We consider strained graphene, modelled by the two-dimensional massive Dirac operator, with potentials corresponding to charge distributions with vanishing total charge, non-vanishing dipole moment and finitely many point charges of subcritical coupling constants located in the graphene sheet. We show that the bound state energies accumulate exponentially fast at the edges of the spectral gap by determining the leading order of the accumulation rate.
\end{abstract}

\section{Introduction}

\noindent Electrons close to the Fermi level in strained graphene can be described by the two-dimensional massive Dirac operator \cite{Vozmediano}. De Martino et al \cite{DeMartino} predicted the exis\-tence of infinitely many bound states of the two-dimensional massive Dirac ope\-rator with a dipole potential and that these bound states accumulate with an exponential rate at the edges of the spectral gap. Shortly after, Cuenin and Siedentop~\cite{Cuenin} proved the former statement, whereas the latter one has so far only been proven under the assumption that no point charges are located directly in the graphene sheet (Rademacher and Siedentop \cite{Rademacher}). The purpose of the present article is to extend the result in~\cite{Rademacher} to the case of potentials with finitely many such Coulomb singularities with subcritical coupling constants.\\
The operator of interest acts in $\textnormal{\textsf{L}}^2(\mathbb{R}^2,\mathbb{C}^2)$ and is formally given by the expression
\begin{align}
\begin{split}
F= & \hspace{0.1cm} -i\pmb{\sigma}\cdot\nabla+m\sigma_3+V\,,\\
V= & \hspace{0.1cm} V_{\textnormal{sing}}+V_{\textnormal{reg}}\,,\\
V_{\textnormal{sing}}= & \hspace{0.1cm} \mbox{\footnotesize $\displaystyle\sum\limits_{n=1}^N$}\nu_n|\cdot - x_n|^{-1}
\end{split}
\end{align}
with $\pmb{\sigma}=(\sigma_1,\sigma_2)$, where $\left\{\sigma_k\right\}_{k=1}^3$ are the standard Pauli matrices, $m\in\mathbb{R}^+:=(0,\infty)$ is a strictly positive mass and $V$ is the (real-valued) potential associated to the charge distribution given by a finite, signed Borel measure $\rho$ in $\mathbb{R}^3$ via
\begin{equation}
V:\mathbb{R}^2\rightarrow\mathbb{R}\,, \quad x\mapsto \displaystyle\int\limits_{\mathbb{R}^3}\frac{\textnormal{d}\rho(y)}{|(x,0)-y|}\,,
\end{equation}
where the charge distribution is, accordingly, split into a singular and regular part, viz.,
\begin{equation}
\begin{split}
\rho= & \hspace{0.2cm} \rho_{\textnormal{sing}}+\rho_{\textnormal{reg}}\,,\\
\rho_{\textnormal{sing}}= & \hspace{0.2cm} \mbox{\footnotesize $\displaystyle\sum\limits_{n=1}^N$}\nu_n\delta(\cdot- (x_n,0))\,,
\end{split}
\end{equation}
where the positions $\left\{x_n\right\}_{n=1}^N\subset \mathbb{R}^2$ of the charges of subcritical coupling constants $\left\{\nu_n\right\}_{n=1}^N\subset \left(-1/2,1/2\right)\setminus \{0\}$ are mutually distinct, i.e., $y_k\neq y_j$ whenever $k\neq j$.
The assumptions on $\rho_{\textnormal{reg}}$ will be made in Thm. \ref{main_theorem}.\\

We denote the dipole moment corresponding to $\rho$ by
\begin{equation}\label{dip_mom}
\mathfrak{d}:=\displaystyle\int\limits_{\mathbb{R}^3}(y_1,y_2)\textnormal{d}\rho (y) \in\mathbb{R}^2
\end{equation}
and the radius of a sufficiently large ball around the Coulomb singularities by
\begin{align}
\gamma:=2\max\limits_{n\in\{1,\dots,N\}}\left\{|x_n|\right\}\,.
\end{align}

In order to obtain a physically sensible self-adjoint realization of $F$, we will recall two basic facts proven in \cite{Cuenin}.

\begin{Prop}[distinguished self-adjoint extension (cf. \cite{Cuenin}, Thm. 1, Rem. 1)]\label{dist_sa_ext}
The operator $-i\pmb{\sigma}\cdot\nabla+m\sigma_3+V_{\textnormal{sing}}$ defined on $\textnormal{\textsf{C}}_0^{\infty}\big(\mbox{\small $\mathbb{R}^2\setminus \mbox{\footnotesize $\left\{x_n\right\}_{n=1}^N$},\mathbb{C}^2$}\big)$ has a unique self-adjoint extension $\tilde{D}$ satisfying $\mathscr{D}(\tilde{D})\subset \textnormal{\textsf{H}}^{1/2}(\mathbb{R}^2,\mathbb{C}^2)$.
\end{Prop}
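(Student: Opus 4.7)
The plan is to construct $\tilde D$ by quadratic form methods on the form domain $\textnormal{\textsf{H}}^{1/2}(\mathbb{R}^2,\mathbb{C}^2)$, which coincides with the form domain of $|D_0|$, where $D_0:=-i\pmb{\sigma}\cdot\nabla+m\sigma_3$ is the free Dirac operator, self-adjoint on $\textnormal{\textsf{H}}^{1}(\mathbb{R}^2,\mathbb{C}^2)$ and satisfying $D_0^2=-\Delta+m^2$ componentwise. Since $D_0$ is not semi-bounded, the classical KLMN theorem does not apply directly, but a Dirac-tailored variant (Nenciu, Esteban--Loss, Morozov--M\"uller) still produces a self-adjoint operator representing the sesquilinear form
\begin{equation*}
q(\psi,\varphi):=\langle\psi,D_0\varphi\rangle+\langle\psi,V_{\mathrm{sing}}\varphi\rangle,\qquad \psi,\varphi\in\textnormal{\textsf{H}}^{1/2}(\mathbb{R}^2,\mathbb{C}^2),
\end{equation*}
provided $V_{\mathrm{sing}}$ is $|D_0|$-form bounded with relative bound strictly less than $1$.

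The analytic heart of the matter is therefore a two-dimensional Hardy--Dirac inequality of the form
\begin{equation*}
\int_{\mathbb{R}^2}\frac{|\psi(x)|^2}{|x-x_n|}\,\mathrm d x\ \le\ 2\bigl\langle\psi,|D_0|\,\psi\bigr\rangle,\qquad \psi\in\textnormal{\textsf{H}}^{1/2}(\mathbb{R}^2,\mathbb{C}^2),
\end{equation*}
which I would prove by decomposing $|D_0|$ into partial waves in polar coordinates centred at $x_n$ and reducing to a family of one-dimensional weighted inequalities. Because $|\nu_n|<1/2$, this yields form boundedness of each individual Coulomb term with relative constant $2|\nu_n|<1$.

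To combine the $N$ singularities I would employ an IMS partition $1=\sum_{n=0}^N\chi_n^2$ of unity, where $\chi_n$ for $n\ge 1$ is supported in a small disk around $x_n$ disjoint from the remaining singularities and $\chi_0$ vanishes near each $x_n$. The localization errors $[D_0,\chi_n]$ are bounded operators, and summing the local Hardy bounds produces
\begin{equation*}
\bigl|\langle\psi,V_{\mathrm{sing}}\psi\rangle\bigr|\ \le\ a\bigl\langle\psi,|D_0|\,\psi\bigr\rangle+b\|\psi\|^2,\qquad a<1,
\end{equation*}
hence the required global relative form bound. The form $q$ is then closed on $\textnormal{\textsf{H}}^{1/2}(\mathbb{R}^2,\mathbb{C}^2)$ and represents a unique self-adjoint operator $\tilde D$ with $\mathscr{D}(\tilde D)\subset\textnormal{\textsf{H}}^{1/2}(\mathbb{R}^2,\mathbb{C}^2)$. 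Uniqueness within this class follows because any self-adjoint extension $S$ with $\mathscr{D}(S)\subset\textnormal{\textsf{H}}^{1/2}(\mathbb{R}^2,\mathbb{C}^2)$ satisfies $\langle S\psi,\varphi\rangle=q(\psi,\varphi)$ on $\mathscr{D}(S)$, so $S\subset\tilde D$, and maximal symmetry of self-adjoint operators forces equality.

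The hardest step is the sharpness of the Hardy--Dirac inequality: the constant must yield relative bound strictly less than $1$ over the \emph{entire} subcritical range $|\nu_n|<1/2$. The scalar Hardy inequality $\int|\psi|^2/|x|^2\,\mathrm d x\le C\|\nabla\psi\|^2$ is unavailable in $\mathbb{R}^2$, so one must exploit the spinorial structure of $D_0$, and the critical constant $1/2$ emerges from the lowest partial-wave channel only after the angular-momentum reduction, via a one-dimensional matrix-valued radial analysis.
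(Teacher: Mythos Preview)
The paper does not supply its own proof of this proposition; it is quoted from Cuenin--Siedentop \cite{Cuenin} (Thm.~1 and Rem.~1), so there is no in-paper argument to compare against directly. Your outline follows the Nenciu form-bound route, which is one of the standard strategies for distinguished Dirac--Coulomb extensions.

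There is, however, a genuine gap in the analytic core. The displayed inequality
\[
\int_{\mathbb{R}^2}\frac{|\psi(x)|^2}{|x-x_n|}\,\mathrm{d}x\ \le\ 2\,\bigl\langle\psi,|D_0|\,\psi\bigr\rangle
\]
cannot hold with constant $2$. Since $|D_0|=\sqrt{-\Delta+m^2}\otimes\mathbb{I}_{\mathbb{C}^2}$ is diagonal in spin, the inequality decouples into two copies of the \emph{scalar} Kato--Herbst inequality in $\mathbb{R}^2$, whose sharp constant is $\Gamma(1/4)^4/(4\pi^2)\approx 4.38$. Hence the Nenciu criterion $\big\||D_0|^{-1/2}V_{\textnormal{sing}}|D_0|^{-1/2}\big\|<1$ is met only for $|\nu_n|\lesssim 0.23$, far short of the full subcritical range $|\nu_n|<1/2$. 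The partial-wave decomposition you invoke afterwards is a decomposition of $D_0$, not of $|D_0|$; applied to $|D_0|$ it yields nothing beyond the scalar bound, so it cannot rescue the stated estimate. Your final paragraph in effect concedes this, but it contradicts the plan laid out above it.

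To reach the threshold $1/2$ one needs a different mechanism. Two that work are: (i) the Esteban--Loss\,/\,M\"uller minimax scheme based on a genuine Hardy--\emph{Dirac} inequality on the positive spectral subspace of $D_0$ (see \cite{Mueller}), which does exploit the spinor structure and is not an inequality against $|D_0|$; or (ii) the direct channel-by-channel construction actually used in \cite{Cuenin} and echoed in Section~2 of the present paper, where in each angular-momentum sector $\kappa=\pm 1/2$ the distinguished extension is singled out by selecting the $r^{+\sqrt{\kappa^2-\nu^2}}$ boundary behaviour at the origin and then verified to lie in $\textnormal{\textsf{H}}^{1/2}$. Either replaces the false $|D_0|$-form bound; your IMS localisation for passing from one to $N$ centres is then fine and is exactly what Rem.~1 in \cite{Cuenin} does.
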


\begin{Prop}[energy gap (\cite{Cuenin}, Prop. 1)]
The essential spectrum of $\tilde{D}$ is given by $\sigma_{\textnormal{ess}}(\tilde{D})=\mathbb{R}\setminus (-m,m)$.
\end{Prop}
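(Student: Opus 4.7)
The strategy is to compute the essential spectrum of the free Dirac operator $D_0 := -i\pmb{\sigma}\cdot\nabla + m\sigma_3$ and then argue, via a Weyl-type resolvent-compactness theorem, that the Coulomb perturbation $V_{\textnormal{sing}}$ leaves the essential spectrum invariant. The first step is standard: the Fourier transform conjugates $D_0$ to multiplication by the symbol $\pmb{\sigma}\cdot\xi + m\sigma_3$, whose pointwise eigenvalues are $\pm\sqrt{|\xi|^2+m^2}$; consequently $\sigma(D_0) = \sigma_{\textnormal{ess}}(D_0) = \mathbb{R}\setminus(-m,m)$.

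Next I would show that for some $z\in\mathbb{C}\setminus\mathbb{R}$ the resolvent difference $(\tilde D - z)^{-1} - (D_0 - z)^{-1}$ is compact on $\textnormal{\textsf{L}}^2(\mathbb{R}^2,\mathbb{C}^2)$, from which Weyl's theorem yields $\sigma_{\textnormal{ess}}(\tilde D) = \sigma_{\textnormal{ess}}(D_0)$. The most convenient route is a Birman--Schwinger factorization: setting $W_n(x) := |x-x_n|^{-1/2}$ and $S_n := |\nu_n|^{1/2}\operatorname{sgn}(\nu_n)\,W_n$, so that $\nu_n|x-x_n|^{-1} = S_n W_n$, the second resolvent identity reduces the desired compactness to compactness of each operator $W_n (D_0-z)^{-1} S_m$ with $n,m\in\{1,\ldots,N\}$. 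The $\textnormal{\textsf{H}}^{1/2}$-regularity of $\mathscr{D}(\tilde D)$ from Prop. \ref{dist_sa_ext}, combined with the subcriticality hypothesis $|\nu_n|<1/2$, guarantees that this quadratic-form manipulation is rigorous.

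Each operator $W_n (D_0-z)^{-1} S_m$ I would handle by splitting the weights with a smooth cutoff $\chi\in\textnormal{\textsf{C}}_0^{\infty}(\mathbb{R}^2)$ supported in a small neighbourhood of $\{x_k\}_{k=1}^N$. On the localized piece, the weight $W_n\chi$ lies in $\textnormal{\textsf{L}}^p(\mathbb{R}^2)$ for every $p<4$, while $(D_0-z)^{-1}$ maps $\textnormal{\textsf{L}}^2$ boundedly into $\textnormal{\textsf{H}}^1$; the Rellich--Kondrachov embedding then furnishes compactness of $W_n\chi\,(D_0-z)^{-1}$. Away from the singularities, $W_n(1-\chi)$ decays at infinity and the classical compactness criterion for operators of the form $f(X)g(-i\nabla)$ with $f,g$ vanishing at infinity applies. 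The main technical obstacle is the low regularity at the Coulomb points: since $\mathscr{D}(\tilde D)$ only lies in $\textnormal{\textsf{H}}^{1/2}$ and not in $\textnormal{\textsf{H}}^1$, $V_{\textnormal{sing}}$ is not Kato-small in the operator sense, and the whole resolvent comparison must be carried out in the quadratic-form framework, where the threshold $|\nu_n|<1/2$ arising from the two-dimensional Hardy--Dirac inequality is indispensable for bounding the Birman--Schwinger kernels.
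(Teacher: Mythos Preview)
The paper does not give its own proof of this proposition: it is quoted verbatim as Prop.~1 from \cite{Cuenin}, with no argument supplied. So there is nothing in the present paper to compare your proposal against.

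That said, your outline is the standard route and is essentially what the cited reference carries out: Fourier diagonalisation of $D_0$ to get $\sigma_{\textnormal{ess}}(D_0)=\mathbb{R}\setminus(-m,m)$, followed by a Weyl-type argument showing that the Coulomb centres do not alter the essential spectrum. Two remarks on the technical side. First, the naive second resolvent identity is not directly available here because $V_{\textnormal{sing}}$ is only form-bounded (not operator-bounded) relative to $D_0$; what one actually uses is the Konno--Kuroda (or Nenciu) resolvent formula for form perturbations, which expresses $(\tilde D-z)^{-1}-(D_0-z)^{-1}$ through the sandwiched operators $W_n(D_0-z)^{-1}W_m$ and the inverse of a bounded Birman--Schwinger operator. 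You gesture at this in your last paragraph, but the phrase ``second resolvent identity'' is misleading. Second, for the compactness of $W_n(D_0-z)^{-1}W_m$ your local argument (the $\textnormal{\textsf{L}}^p$ bound with $p<4$ plus Rellich--Kondrachov) is fine, but the tail piece needs a little more than ``$f(X)g(-i\nabla)$ with $f,g$ vanishing at infinity'': the symbol of $(D_0-z)^{-1}$ does not vanish at infinity in two dimensions, so one should instead use that $(1-\chi)W_n$ is bounded and decays, together with local compactness of $(D_0-z)^{-1}$, or simply observe that $W_n(D_0-z)^{-1}W_m$ is Hilbert--Schmidt by a direct kernel estimate. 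None of this is a genuine obstruction, and your sketch captures the correct architecture.
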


To formulate our main result, we will need the following definitions.

\begin{Def}[rescaled Mathieu operator]
For $p\in \mathbb{R}^+$ we define the \textit{rescaled Ma\-thieu operator with periodic boundary conditions} on the domain $\mathscr{D}(M_p)=\textnormal{\textsf{H}}^2(\mathbb{S}^1)$ as
\begin{equation}
M_p\psi:=\big(-\partial^2-p\cos(\cdot)\big)\psi\,.
\end{equation}
\end{Def}

\begin{Def}
For a self-adjoint operator $A$ and a Borel set $I\subset \mathbb{R}\setminus \sigma_{\textnormal{ess}}(A)$ we define the \textit{number of eigenvalues (counting multiplicity)} by 
\begin{equation}
\mathcal{N}_I(A):=\textnormal{rank}(\chi_I(A))\,,
\end{equation}
where $\chi$ denotes the indicator function.\\
\end{Def}

\begin{Def}
We denote a ball of radius $a\in\mathbb{R}_0^+\cup \{\infty\}$ by $B_a:=\left\{x\in\mathbb{R}^2: |x|<a\right\}$.
\end{Def}

\begin{Def}\label{eff_rest_pot}
We introduce the \textit{effective rest potential} $R$, which is obtained from the potential $V$ by subtracting the short range part of $V_{\textnormal{sing}}$ and the long range part of the pure point dipole, i.e.,
\begin{equation}
R:\mathbb{R}^2\rightarrow \mathbb{R}\,, \quad x\mapsto V_{\textnormal{reg}}(x)+\left[V_{\textnormal{sing}}(x)-\frac{\langle\mathfrak{d},x\rangle_{\mathbb{R}^2}}{|x|^3}\right]\chi_{\mathbb{R}^2\setminus B_{\gamma}}(x)\,.
\end{equation}
\end{Def}

The Kato-Rellich theorem implies that
\begin{equation}\label{operator_of_interest}
D:=\tilde{D}+V_{\textnormal{reg}}
\end{equation}
is self-adjoint if the regular part of the potential $V_{\textnormal{reg}}$ is relatively $\tilde{D}$-bounded with relative bound $n_{\tilde{D}}(V_{\textnormal{reg}})<1$. 

\begin{The}[exponential accumulation rate]\label{main_theorem}
Let $\tilde{D}$ be the distinguished self-adjoint extension of $-i\pmb{\sigma}\cdot\nabla+m\sigma_3+V_{\textnormal{sing}}$ defined on $\textnormal{\textsf{C}}_0^{\infty}\big(\mbox{\small $\mathbb{R}^2\setminus \mbox{\footnotesize $\left\{x_n\right\}_{n=1}^N$},\mathbb{C}^2$}\big)$ (see Prop. \ref{dist_sa_ext}).
Assume that the following hypotheses hold:
\begin{enumerate}
\item The regular part of the potential $V_{\textnormal{reg}}$ is relatively $\tilde{D}$-bounded with relative bound $n_{\tilde{D}}(V_{\textnormal{reg}})<1$. \label{hyp1}
\item The square of the regular part of the potential $(V_{\textnormal{reg}})^2$ is relatively compact w.r.t. the Laplacian $-\Delta_{\mathbb{R}^2}$ defined on $\textnormal{\textsf{H}}^2(\mathbb{R}^2)$.\label{hyp2}
\item The dipole moment $\mathfrak{d}$ (see (\ref{dip_mom})) is non-zero: $\mathfrak{d}\neq 0$.\label{hyp3}
\item There are neighborhoods of the positions $\left\{x_n\right\}_{n=1}^N$ of the point charges in which the regular part of the potential $V_{\textnormal{reg}}$ is bounded.\label{hyp4}
\item The effective rest potential $R$ (see Def. \ref{eff_rest_pot}) fulfills the following integrability conditions: \label{hyp5}
\begin{enumerate}
\item $R,R^2\in \textnormal{\textsf{L}}^1(\mathbb{R}^2;\log(2+|x|)\textnormal{d}x)$.
\item $|R|_*,(R^2)_*\in \textnormal{\textsf{L}}^1(\mathbb{R}^+;\log_+(r^{-1})\textnormal{d}r)$.
\end{enumerate}
Here, $(\cdot)_*$ denotes the (non-increasing) spherical rearrangement (see \cite{Shargorodsky}).
\end{enumerate}
Then, $D$, defined by (\ref{operator_of_interest}), satisfies
\begin{align} \label{main_statement}
\lim\limits_{E\nearrow m}\frac{\mathcal{N}_{(-E,E)}(D)}{|\log(m-E)|}=\frac{1}{\pi}\textnormal{tr}\left(\sqrt{\left(M_{2m|\mathfrak{d}|}\right)_-}\hspace{0.5mm}\right)\,,
\end{align}
where $(\cdot)_-:=-\min\{\cdot,0\}$ denotes the negative part.
\end{The}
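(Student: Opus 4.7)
The plan is to reduce, modulo a lower-order correction produced by the subcritical Coulomb singularities, to the pure-dipole asymptotics of Rademacher and Siedentop~\cite{Rademacher}. The strategy has three stages: translate the Dirac counting into a two-dimensional Schrödinger counting via a non-relativistic limit at the band edges $\pm m$, isolate the Mathieu operator $M_{2m|\mathfrak{d}|}$ by an angular separation of variables in the dipole tail, and bound the error contributed by $V_{\textnormal{sing}}$ and $V_{\textnormal{reg}}$.

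For the first stage, the supersymmetric identity $(D-E)(D+E)=D^2-E^2$, combined with a block/Foldy--Wouthuysen decomposition, sends a Dirac eigenvalue $E$ near $\pm m$ to a negative eigenvalue at energy $2m(E\mp m)$ of an effective operator of the form $-\Delta\pm 2mV+V^2+\text{commutators}$; hypotheses~\ref{hyp1} and~\ref{hyp2} make this reduction rigorous, since $(V_{\textnormal{reg}})^2$ is relatively $-\Delta$-compact. Hence $\mathcal{N}_{(-E,E)}(D)$ receives equal leading contributions from both band edges.

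For the second stage, the separation of variables on $\mathbb{R}^2\setminus B_\gamma$, with the first coordinate axis aligned along $\mathfrak{d}$, reduces the effective angular operator to $M_{2m|\mathfrak{d}|}$. In each angular channel with Mathieu eigenvalue $\mu$, the radial problem carries an inverse-square tail with coupling $\tfrac{1}{4}-\mu$; only supercritical channels $(\mu<0)$ produce a logarithmic accumulation at the threshold, at rate $\sqrt{-\mu}/(2\pi)$ per edge per channel. Summing over channels and doubling for the two edges yields $\frac{1}{\pi}\,\textnormal{tr}\bigl(\sqrt{(M_{2m|\mathfrak{d}|})_-}\hspace{0.5mm}\bigr)$.

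For the third and novel stage, I would prove that $V_{\textnormal{sing}}$ and the effective rest potential $R$ together contribute only $o(|\log(m-E)|)$ bound states. Hypothesis~\ref{hyp5}, paired with the logarithmic kernel of $(-\Delta+\kappa^2)^{-1}$ on $\mathbb{R}^2$, yields an $O(1)$ Birman--Schwinger trace bound for $R$ and $R^2$ uniformly as $\kappa\searrow 0$. For $V_{\textnormal{sing}}$, subcriticality $|\nu_n|<\tfrac{1}{2}$ together with the $\textnormal{\textsf{H}}^{1/2}$-domain from Prop.~\ref{dist_sa_ext} allows one to treat it as a relatively form-bounded perturbation at the Birman--Schwinger level; an IMS-type localization can then decouple a bounded region containing all Coulomb centres (carrying at most finitely many bound states) from the exterior dipole region that supplies the logarithmic accumulation. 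The main obstacle is precisely this last step: controlling the Birman--Schwinger operator associated with $V_{\textnormal{sing}}$ \emph{uniformly} as $E\nearrow m$, rather than for fixed energies away from the edge, requires sharp weighted-resolvent estimates for $\tilde D$ at the band edge that respect both the $\textnormal{\textsf{H}}^{1/2}$-regularity of the distinguished extension and the IMS cut-off; producing this uniform bound is the technical heart of the extension from~\cite{Rademacher} to the Coulombic setting.
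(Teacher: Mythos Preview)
Your three-stage outline is broadly on track, and Stages~1--2 mirror the paper: pass from $\mathcal{N}_{(-E,E)}(D)$ to negative eigenvalues of $D^2-m^2\mathbb{I}$, bound above and below by two-dimensional Schr{\"o}dinger operators of the shape $-\Delta\pm 2mV + cV^2$, then strip off the dipole tail and invoke the pure-dipole asymptotics of~\cite{Rademacher}. The paper does not use Foldy--Wouthuysen, however; it works directly with the quadratic form $\|D\psi\|^2-m^2\|\psi\|^2$ and absorbs the cross term $2\Re\langle -i\pmb{\sigma}\cdot\nabla\psi,V\psi\rangle$ via Cauchy--Schwarz (introducing an auxiliary parameter $\zeta$). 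This sidesteps the commutator $[\pmb{\sigma}\cdot\nabla,V_{\textnormal{sing}}]$, which would otherwise be an $|x-x_n|^{-2}$ singularity and hence worse than the original potential.

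The substantive divergence is in Stage~3, and here the paper's route avoids precisely the obstacle you flag. You propose to control $V_{\textnormal{sing}}$ through a uniform-in-energy Birman--Schwinger bound at the band edge; the paper never attempts this. Instead it applies an IMS localization $\{U_n\}_{n=0}^N$ \emph{at the Dirac level}, isolating each $x_n$ in a ball $B_\epsilon+x_n$. On each such ball the form $\mathfrak{v}[U_n\psi]$ is bounded below by $\tfrac12\big(\|H_\epsilon^{\nu_n}\mathscr{A}_n^*\psi\|^2-\tilde c_n\big)$, where $H_\epsilon^{\nu_n}$ is a self-adjoint extension of the massless Coulomb--Dirac operator on $B_\epsilon$. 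The key input (Lem.~\ref{sa_ext_disc_sp}) is that such an extension can be chosen with \emph{purely discrete spectrum}; thus each singular ball contributes at most $\mathcal{N}_{(-\infty,\tilde c_n)}\big((H_\epsilon^{\nu_n})^2\big)<\infty$ eigenvalues, an energy-independent constant. No edge-uniform resolvent estimate is needed. A companion lemma (Lem.~\ref{finite_defect}, built on the angular decomposition and the explicit $r^{\pm\sqrt{\kappa^2-\nu^2}}$ solutions) shows that restricting the Min--Max variation to $\textnormal{\textsf{C}}_0^\infty(\mathbb{R}^2\setminus\{x_n\})$ costs at most $2N$ eigenvalues, which justifies the localization.

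In short: your proposal is not wrong in spirit, but the technical heart you identify---uniform Birman--Schwinger control of $V_{\textnormal{sing}}$ near the edge---is not what the paper does and is likely harder than necessary. The paper's mechanism is a compactness argument (discrete spectrum of the local Coulomb--Dirac operator on a bounded domain) rather than a quantitative resolvent estimate. Hypothesis~\ref{hyp5} enters only for the \emph{regular} remainder, via Shargorodsky's criterion~\cite{Shargorodsky} that a Schr{\"o}dinger operator with potential in the weighted spaces of Hypothesis~\ref{hyp5} has finitely many negative eigenvalues; this replaces your $O(1)$ Birman--Schwinger trace bound for $R$.
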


\begin{Rem}
Hypothesis \ref{hyp5}.(a) in Thm. \ref{main_theorem} ensures that the total charge vanishes.
\end{Rem}

\begin{Rem}
The expression on the right of (\ref{main_statement}) is always strictly positive, since the lowest eigenvalue of $M_p$ is negative for all $p\in\mathbb{R}^+$ (see \cite{McLachlan}, 3.25, diagram).
\end{Rem}

\section{The Two-Dimensional Massless Dirac Operator with Subcritical Coulomb Potential}

\noindent Let $\nu\in (-1/2,1/2)\setminus\{0\}$. Since the differential expression of the two-dimensional massless Dirac operator with Coulomb potential $\mbox{\small $-i\pmb{\sigma}\cdot\nabla +\nu |\cdot|^{-1}$}$ acting in $\mbox{\small $\textnormal{\textsf{L}}^2(B_a,\mathbb{C}^2)$}$, where $a\in\mathbb{R}^+\cup \{\infty\}$, commutes with the total angular momentum $-i\partial_{\varphi}+\frac{1}{2}\sigma_3$, it can be decomposed by a unitary map $\mathcal{U}_a:\textnormal{\textsf{L}}^2(B_a,\mathbb{C}^2) \rightarrow \displaystyle\bigoplus\limits_{\kappa\in\mathbb{Z}+1/2}\textnormal{\textsf{L}}^2((0,a),\mathbb{C}^2)$ into 
\begin{align}
\displaystyle\bigoplus\limits_{\kappa\in\mathbb{Z}+1/2}\mathbf{d}_{\kappa}^{\nu}\,, \qquad \textnormal{where}\quad \mathbf{d}_{\kappa}^{\nu}:=\begin{pmatrix} \frac{\nu}{r} & -\partial_r-\frac{\kappa}{r} \\ \partial_r-\frac{\kappa}{r} & \frac{\nu}{r} \end{pmatrix}
\end{align}
(see \cite{Thaller}, Section 7.3.3).
We define for $\kappa\in\mathbb{Z}+1/2$ the operator 
\begin{align}
d^{\nu}_{\kappa,a}:\textnormal{\textsf{C}}_0^{\infty}((0,a),\mathbb{C}^2) &\rightarrow \textnormal{\textsf{L}}^2((0,a),\mathbb{C}^2)\,,\quad
\psi \mapsto \mathbf{d}^{\nu}_{\kappa}\psi\,.
\end{align}

The fundamental solution of $\mathbf{d}_{\kappa}^{\nu}\Upsilon=0$ in $\mathbb{R}^+$ is a linear combination of $\Upsilon_{\kappa}^{\nu,+}$ and $\Upsilon_{\kappa}^{\nu,-}$, where 
\begin{align}
\Upsilon_{\kappa}^{\nu,\pm}(r)=\begin{pmatrix} -\nu \\ \pm \mbox{\footnotesize $\sqrt{\kappa^2-\nu^2}$}-\kappa \end{pmatrix} r^{\pm\sqrt{\kappa^2-\nu^2}}\,,
\end{align}
and hence $\mathbf{d}_{\kappa}^{\nu}$ is in the limit circle case at $\vartheta\in\mathbb{R}^+$ and in the limit point case at $\infty$ and, moreover, it is in the limit circle case at $0$ if and only if $\mbox{\small $\kappa=\pm 1/2$}$ \cite{Morozov1}. Thus, the deficiency indices of $\mbox{\small $d^{\nu}_{\kappa,\infty}$}$ are $\mbox{\small $(1,1)$}$ in case of $\mbox{\small $\kappa=\pm 1/2$}$ and $\mbox{\small $(0,0)$}$, otherwise~\cite{Morozov1},\cite{Weidmann2}; the deficiency indices of $\mbox{\small $d_{\kappa,\vartheta}^{\nu}$}$ are $\mbox{\small $(2,2)$}$ in case of $\mbox{\small $\kappa=\pm 1/2$}$ and $\mbox{\small $(1,1)$}$, otherwise \cite{Weidmann2}.\\\\
 We define 
\begin{align} D_a^{\nu}: \textnormal{\textsf{C}}_0^{\infty}(B_a\setminus \{0\},\mathbb{C}^2)\rightarrow \textnormal{\textsf{L}}^2(B_a,\mathbb{C}^2)\,, \quad \psi\mapsto \big(-i\pmb{\sigma}\cdot\nabla +\nu |\cdot|^{-1}\big)\psi\,
\end{align}
and denote the distinguished self-adjoint extension (see Prop.~\ref{dist_sa_ext}) of $D_{\infty}^{\nu}$ by $H^{\nu}_{\infty}$. 

\begin{Lem}\label{sa_ext_disc_sp}
For all $\vartheta\in\mathbb{R}^+$ there exists a self-adjoint extension $H^{\nu}_{\vartheta}$ of $D^{\nu}_{\vartheta}$ with discrete spectrum.
\end{Lem}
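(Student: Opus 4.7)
The approach is to construct $H^\nu_\vartheta$ partial wave by partial wave, using the unitary decomposition $\mathcal{U}_\vartheta D^\nu_\vartheta \mathcal{U}_\vartheta^{*}=\bigoplus_{\kappa\in\mathbb{Z}+1/2}d^\nu_{\kappa,\vartheta}$. It then suffices to (i) produce, for each $\kappa$, a self-adjoint extension $H^\nu_{\kappa,\vartheta}$ of $d^\nu_{\kappa,\vartheta}$ with purely discrete spectrum, and (ii) show that the union of the partial-wave spectra has no finite accumulation point, so that $H^\nu_\vartheta:=\mathcal{U}_\vartheta^{*}\bigl(\bigoplus_{\kappa}H^\nu_{\kappa,\vartheta}\bigr)\mathcal{U}_\vartheta$ itself has discrete spectrum.

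For step (i), the finite deficiency indices recalled above ($(2,2)$ for $\kappa=\pm 1/2$, $(1,1)$ otherwise) together with von Neumann's theorem guarantee the existence of self-adjoint extensions. I would impose a linear boundary condition at the regular endpoint $r=\vartheta$ (an MIT-bag-like condition is natural), and for $\kappa=\pm 1/2$ additionally select the less singular fundamental solution $\Upsilon^{\nu,+}_\kappa$ at $0$, in the spirit of the distinguished extension used for $H^\nu_\infty$. That each resulting $H^\nu_{\kappa,\vartheta}$ has compact resolvent is then standard Sturm--Liouville/Dirac theory on the bounded interval $(0,\vartheta)$: the operator domain embeds continuously into an $\textnormal{\textsf{H}}^{1/2}$-type space on $(0,\vartheta)$, which is compactly embedded in $\textnormal{\textsf{L}}^2((0,\vartheta),\mathbb{C}^2)$ by Rellich--Kondrachov.

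Step (ii) I expect to be the main obstacle. The key fact needed is that $\inf\lvert\sigma(H^\nu_{\kappa,\vartheta})\rvert\to\infty$ as $|\kappa|\to\infty$, because the angular-momentum term $\kappa/r$ in $\mathbf{d}^\nu_\kappa$ dominates the Coulomb term $\nu/r$ uniformly once $|\kappa|\ge 1/2>|\nu|$. Formally squaring $\mathbf{d}^\nu_\kappa$ produces a radial Schr\"odinger-type operator whose effective potential contains the centrifugal contribution $(\kappa^2-\nu^2)/r^2\ge(\kappa^2-\nu^2)/\vartheta^2$ on $(0,\vartheta)$; a min--max comparison then forces the lowest squared eigenvalue to grow like $\kappa^2/\vartheta^2$, so $\lvert\sigma(H^\nu_{\kappa,\vartheta})\rvert\gtrsim |\kappa|/\vartheta$ for large $|\kappa|$. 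Hence, for any fixed $E>0$ only finitely many $\kappa$ contribute spectrum to $[-E,E]$, and each such partial wave contributes only finitely many eigenvalues by (i); this yields the desired discreteness of $\sigma(H^\nu_\vartheta)$.
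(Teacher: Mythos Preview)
Your overall architecture---partial-wave decomposition, compact resolvent in each channel, then showing $\inf|\sigma(H^\nu_{\kappa,\vartheta})|\to\infty$ as $|\kappa|\to\infty$---is exactly the paper's strategy. The differences lie in how the two substeps are justified, and in both places the paper's argument is more concrete than yours.

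For step (i), your appeal to a Rellich--Kondrachov embedding of the operator domain into an ``$\textnormal{\textsf{H}}^{1/2}$-type space'' is not quite standard for a first-order system with a $1/r$ singularity at the limit-point endpoint: from $\mathbf{d}^\nu_\kappa\phi\in\textnormal{\textsf{L}}^2$ one does not directly get $\phi'\in\textnormal{\textsf{L}}^2$ near $r=0$ without first controlling $\phi/r$. The paper sidesteps this by writing down the resolvent kernel explicitly (using the fundamental solutions $\Omega^{\nu,\pm}_\kappa$ of $(\mathbf{d}^\nu_\kappa+i)\Omega=0$) and checking it lies in $\textnormal{\textsf{L}}^2((0,\vartheta)^2)$, hence is Hilbert--Schmidt; for $\kappa=\pm 1/2$ it simply quotes that limit-circle at both endpoints forces a Hilbert--Schmidt resolvent.

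For step (ii), your ``formal squaring'' heuristic hides the real work: expanding $\|\mathbf{d}^\nu_\kappa\psi\|^2$ produces a first-order cross term $-\tfrac{2\kappa-3}{2r}\partial_r\bigl[|\psi_1|^2-|\psi_2|^2\bigr]$ which, after integration by parts, leaves a boundary contribution at $r=\vartheta$. The paper's specific choice of boundary condition $\phi_1(\vartheta)=\phi_2(\vartheta)$ is precisely what kills that boundary term and makes the remaining expression manifestly $\gtrsim \kappa^2\vartheta^{-2}\|\psi\|^2$. With a generic ``MIT-bag-like'' condition you would have to redo this computation and check that the analogous boundary term has the right sign; it is not automatic.
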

\begin{proof}
It suffices to find self-adjoint extensions $\left\{\mbox{\small $\hat{d}^{\nu}_{\kappa,\vartheta}$}\right\}_{\kappa\in\mathbb{Z}+1/2}$ of $\left\{\mbox{\small $d^{\nu}_{\kappa,\vartheta}$}\right\}_{\kappa\in\mathbb{Z}+1/2}$ with compact resolvents and the property that 
\begin{align}\label{res_conv_to_zero}
\left\|(\hat{d}^{\nu}_{\kappa,\vartheta}+i\mathbb{I})^{-1}\right\|\longrightarrow 0\quad\textnormal{as}\quad \kappa\rightarrow \pm\infty\,.
\end{align}
In case of $\kappa=\pm 1/2$, the resolvent of any self-adjoint extension of $d^{\nu}_{\kappa,\vartheta}$ is a Hilbert-Schmidt operator, since $\mathbf{d}_{\kappa}^{\nu}$ is in the limit circle case both at $0$ and at $\vartheta$ (see~\cite{Weidmann2}, Prop. 1.6).
Now, let $\kappa\neq \pm 1/2$. A self-adjoint extension of $d^{\nu}_{\kappa,\vartheta}$ is given by
\begin{gather}
\begin{split}
\hat{d}_{\kappa,\vartheta}^{\nu}: \mathscr{D}(\hat{d}_{\kappa,\vartheta}^{\nu}) \rightarrow \textnormal{\textsf{L}}^2((0,\vartheta),\mathbb{C}^2)\,,\quad
\psi \mapsto \mathbf{d}^{\nu}_{\kappa}\psi\,,\qquad\qquad\qquad\\
\textnormal{where}\qquad\quad\mathscr{D}(\hat{d}_{\kappa,\vartheta}^{\nu}):=\big\{\phi\equiv (\phi_1,\phi_2)\in \textnormal{\textsf{L}}^2((0,\vartheta),\mathbb{C}^2)\cap \textnormal{\textsf{AC}}_{\textnormal{loc}}((0,\vartheta),\mathbb{C}^2):\qquad\\
 \mathbf{d}_{\kappa}^{\nu}\phi\in \textnormal{\textsf{L}}^2((0,\vartheta),\mathbb{C}^2),\phi_1(\vartheta)=\phi_2(\vartheta)\big\}\,.\quad
\end{split}
\end{gather}
Indeed, observing that the conditions $\phi_1(\vartheta)=\phi_2(\vartheta)$ and $\left\langle i\sigma_2\phi(\vartheta),\mbox{\small $\Upsilon_{\kappa}^{\nu}$}(\vartheta)\right\rangle_{\mathbb{C}^2}=0$, where
\begin{align}
\Upsilon_{\kappa}^{\nu}:=\big(\mbox{\footnotesize $\sqrt{\kappa^2-\nu^2}$}-\nu+\kappa\big)\Upsilon^{\nu,+}_{\kappa}+\vartheta^{2\sqrt{\kappa^2-\nu^2}}\big(\mbox{\footnotesize $\sqrt{\kappa^2-\nu^2}$}+\nu-\kappa\big)\Upsilon^{\nu,-}_{\kappa}\,,
\end{align}
are equivalent, this follows from Prop. 1.5 in \cite{Weidmann2}, since $\Upsilon_{\kappa}^{\nu}$ solves $\mathbf{d}_{\kappa}^{\nu}\Upsilon=0$.\\
There are two functions $\Omega_{\kappa}^{\nu,\pm}:\mathbb{R}^+\rightarrow\mathbb{C}^2$ with
\begin{align}\label{sol1}
\Omega^{\nu,+}_{\kappa}(r)=r^{\sqrt{\kappa^2-\nu^2}}\begin{pmatrix}\kappa+\mbox{\footnotesize $\sqrt{\kappa^2-\nu^2}$}\\ \nu\end{pmatrix}+\mathcal{O}\left(r^{1+\sqrt{\kappa^2-\nu^2}}\right)
\end{align}
and
\begin{align}\label{sol2}
\Omega^{\nu,-}_{\kappa}(r)=r^{-\sqrt{\kappa^2-\nu^2}}\begin{pmatrix} \nu \\ \kappa+\mbox{\footnotesize $\sqrt{\kappa^2-\nu^2}$}\end{pmatrix} +\mathcal{O}\left(r^{1-\sqrt{\kappa^2-\nu^2}}\right)
\end{align}
\begin{samepage}\noindent
as $r\rightarrow 0$ generating the fundamental solution of $(\mathbf{d}_{\kappa}^{\nu}+i)\Omega=0$ in $\mathbb{R}^+$ (see~\cite{Morozov2}, Lem.~8). For any $c\in (0,\vartheta)$, the restriction of $\Omega^{\nu,-}_{\kappa}$ to $(0,c)$ is not contained in $\textnormal{\textsf{L}}^2((0,c),\mathbb{R}^2)$. Therefore, the integral kernel of $(\hat{d}^{\nu}_{\kappa,\vartheta}+i\mathbb{I})^{-1}$ is given by
\begin{align}
G_{\kappa,\vartheta}^{\nu}:(0,\vartheta)^2\rightarrow \mathbb{C}^{2\times 2}\,,\quad
(x,y) \mapsto \textnormal{const.}\begin{cases}
\Omega_{\kappa}^{\nu,+}(x)\mbox{\footnotesize $\Big($}\Omega_{\kappa,\vartheta}^{\nu}(y)\mbox{\footnotesize $\Big)$}^{\intercal}\,,\quad x<y,\\\\
\Omega_{\kappa,\vartheta}^{\nu}(x)\mbox{\footnotesize $\Big($}\Omega_{\kappa}^{\nu,+}(y)\mbox{\footnotesize $\Big)$}^{\intercal}\,,\quad x\geq y\,,
\end{cases}
\end{align}
where $\Omega_{\kappa,\vartheta}^{\nu}$ is the solution of $(\mathbf{d}_{\kappa}^{\nu}+i)\Omega=0$ satisfying $\big\langle i\sigma_2\mbox{\small $\Omega_{\kappa,\vartheta}^{\nu}$}(\vartheta),\mbox{\small $\Upsilon_{\kappa}^{\nu}$}(\vartheta)\big\rangle_{\mathbb{C}^2}=0$ (see \cite{Weidmann2}, Prop. 1.6). It follows from (\ref{sol1}), (\ref{sol2}) and the continuity of $\Omega_{\kappa}^{\nu,\pm}$ on $(0,\vartheta]$ that the components of $\Omega_{\kappa}^{\nu,+}(r)$ and $\Omega_{\kappa,\vartheta}^{\nu}(r)$ are bounded on $r\in (0,\vartheta)$ by $Cr^{\sqrt{\kappa^2-\nu^2}}$ and $Cr^{-\sqrt{\kappa^2-\nu^2}}$ for some $C\in\mathbb{R}^+$, respectively. Thus, $G_{\kappa,\vartheta}^{\nu}$ lies in $\textnormal{\textsf{L}}^2((0,\vartheta)^2,\mathbb{C}^{2\times 2})$ and hence $\big(\hat{d}^{\nu}_{\kappa,\vartheta}+i\mathbb{I}\big)^{-1}$ is a Hilbert-Schmidt operator.
A core for $\hat{d}^{\nu}_{\kappa,\vartheta}$ is given by $\mathfrak{C}^{\vartheta}:=\left\{\left(\phi_1,\phi_2\right)\in \textnormal{\textsf{C}}_0^{\infty}((0,\vartheta],\mathbb{C}^2): \phi_1(\vartheta)=\phi_2(\vartheta)\right\}$, since the closure of the restriction of $\hat{d}^{\nu}_{\kappa,\vartheta}$ to $\mathfrak{C}^{\vartheta}$ is a  strict extension of $\overline{d^{\nu}_{\kappa,\vartheta}}$. Indeed, for instance, $\mbox{\small $f:=(1,1)\lim\limits_{\mbox{\scriptsize $x\downarrow \max\{\cdot,\mbox{\scriptsize $\vartheta/2$}\} $}}\exp\left[2/(\vartheta-2x)\right]$}\in\mathfrak{C}^{\vartheta}\setminus\mathscr{D}\big(\overline{\mbox{\small $d^{\nu}_{\kappa,\vartheta}$}}\big)$, as $\big\langle i\sigma_2 f(\vartheta),\Upsilon_{\kappa}^{\nu,+}(\vartheta)\big\rangle_{\mathbb{C}^2}\neq 0$ (see \cite{Weidmann2}, Prop. 1.5). Now, let $\psi\equiv\mbox{\small $(\psi_1,\psi_2)$}\in \mathfrak{C}^{\vartheta}$. Then, $2\vartheta\big\|\hat{d}^{\nu}_{\kappa,\vartheta}\psi\big\|\geq |\kappa|\hspace{0.5mm}\|\psi\|$ holds for large $|\kappa|$, hence (\ref{res_conv_to_zero}) is satisfied. Indeed, 
\begin{equation}
\begin{aligned}
\left\|\hat{d}_{\kappa,\vartheta}^{\nu}\psi\right\|^2
&=\overbrace{\left\|\hat{d}_{\mbox{\scriptsize $3/2$},\vartheta}^{\nu}\psi\right\|^2}^{\geq 0}+\displaystyle\int\limits_0^{\vartheta}\textnormal{d}r\hspace{0.5mm} \Bigg(\frac{|2\kappa-3|^2+6(2\kappa-3)}{4r^2}|\psi(r)|^2-\\
& \underbrace{-\frac{2(2\kappa-3)\nu}{r^2}\Re\left[\overline{\psi_1(r)}\psi_2(r)\right]}_{\geq -\frac{|2\kappa-3||\nu|}{r^2}|\psi(r)|^2}-\frac{2\kappa-3}{2r}\partial_r\left[|\psi_1(r)|^2-|\psi_2(r)|^2\right]\Bigg)\\
& \geq \overbrace{\frac{1}{4}\left[|2\kappa-3|^2+6(2\kappa-3)-4|2\kappa-3|\left(|\nu|+\mbox{\small $\frac{1}{2}$}\right)\right]\displaystyle\int\limits_0^{\vartheta}\textnormal{d}r\hspace{0.5mm}\left|\frac{\psi(r)}{r}\right|^2}^{=:S^{\nu}_{\kappa}[\psi]}+\\
& + \frac{1}{2}\underbrace{\left(|2\kappa-3|\displaystyle\int\limits_0^{\vartheta}\textnormal{d}r\hspace{0.5mm} \left|\frac{\psi(r)}{r}\right|^2-(2\kappa-3)\displaystyle\int\limits_0^{\vartheta}\textnormal{d}r\hspace{0.5mm} \frac{1}{r}\partial_r\left[|\psi_1(r)|^2-|\psi_2(r)|^2\right]\right)}_{=:T_{\kappa}[\psi]}\,.
\end{aligned}
\end{equation}
Exploiting the boundary condition at $r=\vartheta$, integration by parts yields
\begin{equation}
\displaystyle\int\limits_0^{\vartheta}\textnormal{d}r\hspace{0.5mm} \frac{1}{r}\partial_r\left[|\psi_1(r)|^2-|\psi_2(r)|^2\right]=\displaystyle\int\limits_0^{\vartheta}\textnormal{d}r\hspace{0.5mm}\frac{1}{r^2}\left[|\psi_1(r)|^2-|\psi_2(r)|^2\right]
\end{equation}
and, therefore, $T_{\kappa}[\psi]$ is nonnegative. But for large $|\kappa|$ it holds that
\begin{equation}
4S_{\kappa}^{\nu}[\psi]\geq \kappa^2\displaystyle\int\limits_0^{\vartheta}\textnormal{d}r\hspace{0.5mm} \left|\frac{\psi(r)}{r}\right|^2\geq \frac{\kappa^2}{\vartheta^2}\|\psi\|^2\,.
\end{equation}
\end{samepage}
\end{proof}

\begin{Lem}\label{Coulomb_Dirac_core}
For all $\vartheta\in\mathbb{R}^+$ there exist $f_{\vartheta},g_{\vartheta}\in \textnormal{\textsf{H}}^{1/2}(B_{\vartheta},\mathbb{C}^2)$ so that the restriction of $H^{\nu}_{\infty}$ to $\textnormal{\textsf{C}}_0^{\infty}(\mathbb{R}^2\setminus \{0\},\mathbb{C}^2)\dotplus \textnormal{span}\left\{f_{\vartheta},g_{\vartheta}\right\}$ is essentially self-adjoint.
\begin{proof}
For $\kappa\neq \pm 1/2$ the operator $d^{\nu}_{\kappa,\infty}$ is essentially self-adjoint (see above).
Now, let $\kappa=\pm 1/2$. We pick $\varsigma\in \textnormal{\textsf{C}}_0^{\infty}([0,\vartheta))$ such that $\varsigma(2w)=1 \quad \forall w\leq \vartheta$. Then, it holds that $\varsigma\Upsilon^{\nu,\pm}_{\kappa}\not\in\mathscr{D}\big(\mbox{\small $\overline{d^{\nu}_{\kappa,\infty}}$}\big)$, since $\lim\limits_{r\rightarrow 0}\big\langle i\sigma_2\varsigma(r)\mbox{\small $\Upsilon^{\nu,\pm}_{\kappa}$}(r),\mbox{\small $\Upsilon^{\nu,\mp}_{\kappa}$}(r)\big\rangle_{\mathbb{C}^2}\neq 0$ (see \cite{Weidmann2}, Prop. 1.5), and therefore
\begin{align}
\mathscr{D}\mbox{\footnotesize $\Big($}\left(d^{\nu}_{\kappa,\infty}\right)^*\mbox{\footnotesize $\Big)$}=\mathscr{D}\mbox{\footnotesize $\Big($}\mbox{\small $\overline{d^{\nu}_{\kappa,\infty}}$}\mbox{\footnotesize $\Big)$}\dotplus \textnormal{span}\Big\{\varsigma\Upsilon^{\nu,+}_{\kappa},\varsigma\Upsilon^{\nu,-}_{\kappa}\Big\}\,.
\end{align}
Thus, any self-adjoint extension of $d^{\nu}_{\kappa,\infty}$ is obtained as the closure of the restriction of $\left(d^{\nu}_{\kappa,\infty}\right)^*$ to $\textnormal{\textsf{C}}_0^{\infty}(\mathbb{R}^+,\mathbb{C}^2)\dotplus\textnormal{span}\big\{\varsigma\big(\alpha\mbox{\small $\Upsilon^{\nu,+}_{\kappa}$}+\beta\mbox{\small $\Upsilon^{\nu,-}_{\kappa}$}\big)\big\}$ with $(\alpha,\beta)\in\mathbb{C}^2\setminus\{0\}$; the distinguished one is obtained by setting $(\alpha,\beta)=(1,0)$. Indeed, one may \mbox{verify that} $\mathcal{U}_{\infty}^*\mathfrak{P}_{\kappa}\big[\varsigma\big(\alpha\mbox{\small $\Upsilon^{\nu,+}_{\kappa}$}+\beta\mbox{\small $\Upsilon^{\nu,-}_{\kappa}$}\big)\big]\not\in \textnormal{\textsf{L}}^{2}(\mathbb{R}^2,\mathbb{C}^2;|x|^{-1}\textnormal{d}x)$ for all $(\alpha,\beta)\in\mathbb{C}^2\setminus\textnormal{span}\left\{(1,0)\right\}$, where $\mathfrak{P}_{\kappa}:\textnormal{\textsf{L}}^2(\mathbb{R}^+,\mathbb{C}^2)\rightarrow \mbox{\small $\bigoplus\limits_{\mbox{\tiny $\lambda\in\mathbb{Z}+1/2$}}$}\textnormal{\textsf{L}}^2(\mathbb{R}^+,\mathbb{C}^2)$, $f\mapsto\mbox{\small $\bigoplus\limits_{\mbox{\tiny $\lambda\in\mathbb{Z}+1/2$}}$} \delta_{\lambda,\kappa}f$, where $\delta_{\cdot,\cdot}$ is the Kronecker symbol. But $\textnormal{\textsf{H}}^{1/2}(\mathbb{R}^2)\subset \textnormal{\textsf{L}}^2(\mathbb{R}^2;(1+|x|^{-1})\textnormal{d}x)$ (cf. \cite{Mueller}, Rem. 15).
\end{proof}
\end{Lem}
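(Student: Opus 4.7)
The plan is to exploit the partial-wave decomposition $\mathcal{U}_\infty H^\nu_\infty \mathcal{U}_\infty^* = \bigoplus_{\kappa \in \mathbb{Z}+1/2} \hat{d}^\nu_{\kappa,\infty}$ and to establish the core property channel by channel. For every $\kappa \neq \pm 1/2$ the deficiency indices at $0$ vanish, so by the analysis recalled just before the lemma, $d^\nu_{\kappa,\infty}$ is already essentially self-adjoint on $\textnormal{\textsf{C}}_0^\infty((0,\infty),\mathbb{C}^2)$ and those channels require no extra input beyond the image of $\textnormal{\textsf{C}}_0^\infty(\mathbb{R}^2 \setminus \{0\}, \mathbb{C}^2)$ under the partial-wave projections. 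Hence the entire difficulty is located in the two exceptional channels $\kappa = \pm 1/2$, each of which contributes deficiency indices $(1,1)$ and should absorb exactly one additional basis vector.

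Next I would identify which one-dimensional boundary condition at $r=0$ corresponds to the distinguished extension. Fixing a cutoff $\varsigma \in \textnormal{\textsf{C}}_0^\infty([0,\vartheta))$ with $\varsigma \equiv 1$ near the origin, the candidate deficiency vectors in these channels are $\varsigma \Upsilon^{\nu,\pm}_\kappa$, with asymptotic behaviour $r^{\pm \sqrt{1/4-\nu^2}}$ as $r \to 0$. Writing $s := \sqrt{1/4 - \nu^2} \in (0,1/2)$, I would pull back to $\mathbb{R}^2$ via $\mathcal{U}_\infty^* \mathfrak{P}_\kappa$ and compare the two options against the weighted-$\textnormal{\textsf{L}}^2$ embedding $\textnormal{\textsf{H}}^{1/2}(\mathbb{R}^2) \hookrightarrow \textnormal{\textsf{L}}^2(\mathbb{R}^2;(1+|x|^{-1})\textnormal{d}x)$: the ``$-$'' branch fails the $|x|^{-1}$-weighted square-integrability near the origin, so by the very characterization of the distinguished extension as the unique self-adjoint extension contained in $\textnormal{\textsf{H}}^{1/2}$, only the ``$+$'' boundary datum survives in each of the two exceptional channels.

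I would therefore set $f_\vartheta := \mathcal{U}_\infty^* \mathfrak{P}_{1/2}[\varsigma \Upsilon^{\nu,+}_{1/2}]$ and $g_\vartheta := \mathcal{U}_\infty^* \mathfrak{P}_{-1/2}[\varsigma \Upsilon^{\nu,+}_{-1/2}]$, both supported in $B_\vartheta$ by construction, and verify $f_\vartheta, g_\vartheta \in \textnormal{\textsf{H}}^{1/2}(B_\vartheta, \mathbb{C}^2)$ using that a radial function behaving like $|x|^s$ with $s > 0$ near $0$, multiplied by a smooth compactly supported cutoff, belongs to $\textnormal{\textsf{H}}^\alpha(\mathbb{R}^2)$ for every $\alpha < s+1$, in particular to $\textnormal{\textsf{H}}^{1/2}$. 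Assembling the channelwise cores and using essential self-adjointness in all other angular momentum sectors then yields that $\textnormal{\textsf{C}}_0^\infty(\mathbb{R}^2 \setminus \{0\}, \mathbb{C}^2) \dotplus \textnormal{span}\{f_\vartheta, g_\vartheta\}$ is a core for $H^\nu_\infty$.

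The main obstacle I anticipate is not the regularity check but the precise matching of ``distinguished extension $\subset \textnormal{\textsf{H}}^{1/2}$'' with the boundary behaviour $r^{+s}$ at the origin in exactly the channels $\kappa = \pm 1/2$: one must translate the two-dimensional $\textnormal{\textsf{H}}^{1/2}$ condition into a weighted radial-$\textnormal{\textsf{L}}^2$ condition compatible with the partial-wave decomposition, and rule out any nontrivial admixture of the $r^{-s}$ solution under $\mathfrak{P}_\kappa$. Once that identification is locked in, combining with the $(0,0)$-deficiency result for $\kappa \neq \pm 1/2$ makes the assembly routine.
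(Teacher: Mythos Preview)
Your proposal is correct and follows essentially the same route as the paper: channelwise reduction via $\mathcal{U}_\infty$, essential self-adjointness for $|\kappa|\neq 1/2$, and in the two exceptional channels the identification of the distinguished boundary condition through the embedding $\textnormal{\textsf{H}}^{1/2}(\mathbb{R}^2)\subset \textnormal{\textsf{L}}^2(\mathbb{R}^2;(1+|x|^{-1})\textnormal{d}x)$, which rules out any admixture of $\Upsilon^{\nu,-}_\kappa$. One small caveat: after pulling back through $\mathcal{U}_\infty^*$ the two-dimensional behaviour near the origin is $|x|^{s-1/2}$ rather than $|x|^s$ (and the functions carry angular dependence from the $\kappa=\pm 1/2$ sectors), but since $s-1/2>-1/2$ your $\textnormal{\textsf{H}}^{1/2}$-regularity check still goes through; the paper sidesteps this direct verification by simply noting that $f_\vartheta,g_\vartheta$ lie in $\mathscr{D}(H^\nu_\infty)\subset\textnormal{\textsf{H}}^{1/2}$ by Proposition~\ref{dist_sa_ext}.
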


\section{Proof of the Exponential Accumulation Rate}

\noindent It follows from Hypothesis \ref{hyp4} in Thm. \ref{main_theorem} that we can choose
\begin{align}
\epsilon\in\left(0,\min\left\{|x_j-x_k|/3:j,k\in\{1,\dots,N\},j\neq k\right\}\right)
\end{align}
such that $V_{\textnormal{reg}}$ is bounded on $\left(B_{\epsilon}+x_n\right)$ for all $n\in\{1,\dots,N\}$.\\
In order to localize the Coulomb singularities, we pick a partition of unity $\mbox{\small $\left\{U_n\right\}_{n=0}^N$}$ with the following properties:
\begin{itemize}
\item $\left\{U_n\right\}_{n=0}^N\subset \textnormal{\textsf{C}}^{\infty}(\mathbb{R}^2,[0,1])\,,$
\item $\mbox{\footnotesize $\displaystyle\sum\limits_{n=0}^N$}(U_n)^2=1\,,$
\item $\textnormal{supp}(U_n)\in B_{\epsilon}+x_n$ for all $n\in\{1,\dots,N\}\,,$
\item $\textnormal{supp}(U_0)\subset \mathbb{R}^2\setminus \mbox{\footnotesize $\displaystyle\bigcup\limits_{n=1}^N$}\left(\overline{B_{\epsilon/2}}+x_n\right)\,.$
\end{itemize}

Thanks to the spectral theorem, we can devote ourselves to the study of the asymptotic behavior of the negative eigenvalues of $D^2-m^2\mathbb{I}\,$ using the Min-Max principle (see~\cite{Schmuedgen}, Thm. 12.1). The Min-Max values of a lower semi-bounded self-adjoint operator $A$ are denoted by $\mu_{(\cdot)}(A)$ (see~\cite{Schmuedgen}, formula (12.2)). The following lemma guarantees that we can restrict the minimization to $\mathfrak{C}:=\textnormal{\textsf{C}}_0^{\infty}\big(\mbox{\small $\mathbb{R}^2\setminus \mbox{\footnotesize $\left\{x_n\right\}_{n=1}^N$},\mathbb{C}^2$}\big)$ when investigating the asymptotic behavior of the eigenvalues.

\begin{Lem}\label{finite_defect}
The defect number of the restriction of $D$ to $\mathfrak{C}$ is bounded above by $2N$.
\end{Lem}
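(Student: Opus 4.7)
The plan is to exhibit explicitly a subspace $W\subset\mathscr{D}(D)$ of dimension at most $2N$ with $\mathscr{D}(D)\subset\mathscr{D}(S)+W$, where $S:=\overline{D|_{\mathfrak{C}}}$. Because $D$ is self-adjoint and extends $S$, the deficiency indices of $S$ coincide, and von Neumann's formula identifies $n_\pm(S)$ with $\dim(\mathscr{D}(D)/\mathscr{D}(S))$; such a $W$ therefore bounds the defect number by $2N$.

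For the construction of $W$, I would apply Lemma~\ref{Coulomb_Dirac_core} at each singularity. Shifting the origin to $x_n$ and choosing $\vartheta=\epsilon/2$, the lemma yields two functions $f_n,g_n\in\textnormal{\textsf{H}}^{1/2}(B_{\epsilon/2}(x_n),\mathbb{C}^2)$, which I extend by zero. By hypothesis~\ref{hyp4}, $V_{\textnormal{reg}}$ is bounded on $B_\epsilon+x_n$, and the choice of $\epsilon$ makes $V_{\textnormal{sing}}-\nu_n|\cdot-x_n|^{-1}$ smooth and bounded there, so $f_n,g_n\in\mathscr{D}(\tilde D)=\mathscr{D}(D)$. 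I set $W:=\textnormal{span}\{f_n,g_n\}_{n=1}^N$.

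To show $\mathscr{D}(D)\subset\mathscr{D}(S)+W$, take $\psi\in\mathscr{D}(D)$ and write $\psi=\sum_{n=0}^N U_n^2\psi$. Each summand remains in $\mathscr{D}(D)$ because $[D,U_n^2]=-i\boldsymbol\sigma\cdot\nabla(U_n^2)$ is bounded. The piece $U_0^2\psi$ is supported in $\mathbb{R}^2\setminus\bigcup_n\overline{B_{\epsilon/2}(x_n)}$, on which $D$ is the massive Dirac operator plus a smooth bounded potential; a standard mollification places $U_0^2\psi$ in $\mathscr{D}(S)$. For $n\ge 1$, $U_n^2\psi$ is supported in $B_\epsilon(x_n)$, where the only singular term of $D$ is $\nu_n|\cdot-x_n|^{-1}$; Lemma~\ref{Coulomb_Dirac_core} then gives a decomposition $U_n^2\psi=\phi_n+\alpha_nf_n+\beta_ng_n$ with $\phi_n$ in the graph-closure of $\textnormal{\textsf{C}}_0^\infty(\mathbb{R}^2\setminus\{x_n\},\mathbb{C}^2)$, and since $\textnormal{supp}(\phi_n)\subset B_\epsilon(x_n)$ is disjoint from $\{x_k\}_{k\neq n}$, in fact $\phi_n\in\mathscr{D}(S)$. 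Summing over $n$ yields $\psi\in\mathscr{D}(S)+W$.

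The main obstacle is this last step, namely transferring the single-singularity essential self-adjointness in Lemma~\ref{Coulomb_Dirac_core}, which concerns $H^{\nu_n}_\infty$, to the full operator $D$ localized near $x_n$. The crucial observation is that on $\textnormal{supp}(U_n)$ the difference $D-(-i\boldsymbol\sigma\cdot\nabla+\nu_n|\cdot-x_n|^{-1})$ is a bounded multiplication operator, so the two graph norms are equivalent on functions supported in $B_\epsilon(x_n)$; this equivalence allows the core statement from Lemma~\ref{Coulomb_Dirac_core} to pass unchanged to $D$, and the count of two limit-circle channels $\kappa=\pm 1/2$ per singularity produces exactly the factor $2$ in the $2N$ bound.
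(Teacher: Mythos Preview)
Your approach is essentially the same as the paper's: both introduce the $2N$-dimensional space $W=\textnormal{span}\{f(\cdot-x_n),g(\cdot-x_n)\}_{n=1}^N$ coming from Lemma~\ref{Coulomb_Dirac_core}, decompose via the IMS partition $\{U_n\}_{n=0}^N$, and argue that $\mathfrak{C}\dotplus W$ is a core for $D$. The paper packages this slightly more uniformly by showing that each $U_n\tilde{D}U_n$ is essentially self-adjoint on $\tilde{\mathfrak{C}}:=\mathfrak{C}\dotplus W$, approximating an arbitrary $\psi\in\mathscr{D}(\tilde{D})$ by $\phi_k=\sum_n U_n^2\psi_k^{(n)}\in\tilde{\mathfrak{C}}$, and only then passing from $\tilde{D}$ to $D$ via Kato--Rellich.

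One small slip in your write-up: for $n=0$ you assert that on $\textnormal{supp}(U_0)$ the operator $D$ is ``the massive Dirac operator plus a smooth bounded potential''. This holds for $\tilde{D}$ (since $V_{\textnormal{sing}}$ is smooth and bounded away from the $x_n$), but $V_{\textnormal{reg}}$ is only assumed relatively $\tilde{D}$-bounded (Hypothesis~\ref{hyp1}) and bounded near the $x_n$ (Hypothesis~\ref{hyp4}); it need not be bounded on $\textnormal{supp}(U_0)$. The repair is exactly what the paper does: run the entire core argument for $\tilde{D}$ and then invoke Kato--Rellich once at the end to conclude that $\tilde{\mathfrak{C}}$ is also a core for $D$. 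With that adjustment your argument goes through.
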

\begin{proof}
It follows from Lem.~\ref{Coulomb_Dirac_core} that for all $n\in\{0,\dots,N\}$ the restriction of $U_n \tilde{D}U_n$ to $\mbox{$\tilde{\mathfrak{C}}:=\mbox{ $\mathfrak{C}\dotplus \textnormal{span}$}\mbox{\footnotesize $\Big($}\mbox{$\big\{f_{\mbox{\scriptsize $\epsilon$}}(\cdot-x_j),g_{\mbox{\scriptsize $\epsilon$}}(\cdot-x_j)\big\}_{\mbox{\scriptsize $j=1$}}^{\mbox{\scriptsize $N$}}$}\mbox{\footnotesize $\Big)$}$}$ is essentially self-adjoint. Hence, given $\psi\in\mathscr{D}(\tilde{D})$ and $n\in\{0,\dots,N\}$, we can choose a sequence $\left\{\mbox{\footnotesize $\psi^{(n)}_k$}\right\}_{k\in\mathbb{N}}\subset\tilde{\mathfrak{C}}$ such that $\psi^{(n)}_k\rightarrow\psi$ as $k\rightarrow\infty$ w.r.t. $\|\cdot\|_{ U_n\tilde{D} U_n}$. For $k\in\mathbb{N}$ we define $\phi_k:=\mbox{\small $\sum\limits_{n=0}^N (U_n)^2\psi^{(n)}_k$}$ and observe that $\left\{\phi_k\right\}_{k\in\mathbb{N}}\subset\tilde{\mathfrak{C}}$ and $\phi_k\rightarrow\psi$ as $k\rightarrow\infty$ w.r.t. $\|\cdot\|_{\tilde{D}}$. Thus, $\tilde{\mathfrak{C}}$ is a core for $\tilde{D}$ and - by the Kato-Rellich theorem~\cite{Reed2} and Hypothesis \ref{hyp1} in Thm.~\ref{main_theorem} - also \mbox{for $D$}.
\end{proof}

Following Rademacher and Siedentop~\cite{Rademacher}, we will deal with Schr{\"o}dinger opera\-tors with relatively compact perturbations of $-\Delta_{\mathbb{R}^2}:\textnormal{\textsf{H}}^2(\mathbb{R}^2)\rightarrow \textnormal{\textsf{L}}^2(\mathbb{R}^2)\,,\psi\mapsto -\Delta\psi$
whose eigenvalues bound those of $D^2-m^2\mathbb{I}$. It follows by the inequality of Seiler and Simon (see~\cite{Seiler}, Lem.~2.1) that $W_2\left(\mbox{\small $-\Delta_{\mathbb{R}^2}$}+\mathbb{I}\right)^{-1}$ is a Hilbert-Schmidt operator for all $W_2\in\textnormal{\textsf{L}}^2(\mathbb{R}^2)$. As the operator norm limit preserves compactness, any potential that lies in
\begin{align}
\begin{split}
\textnormal{\textsf{L}}^{2}_{\infty}(\mathbb{R}^2):=\Big\{W:\mathbb{R}^2\rightarrow\mathbb{C}: \hspace{0.5mm}\forall \epsilon\in\mathbb{R}^+ \hspace{1.5mm}\exists (W_2,W_{\infty})\in \textnormal{\textsf{L}}^2(\mathbb{R}^2)\times\textnormal{\textsf{L}}^{\infty}(\mathbb{R}^2)\\
 \textnormal{such that } W=W_2+W_{\infty} \textnormal{ and } \|W_{\infty}\|_{\infty}<\epsilon \Big\}
\end{split}
\end{align}
is relatively compact w.r.t. $-\Delta_{\mathbb{R}^2}$. For any such $(\mbox{\small $-\Delta_{\mathbb{R}^2}$})$-compact potential $W$, the operator $-\Delta_{\mathbb{R}^2}+W$ is bounded from below and its restriction to $\textnormal{\textsf{C}}_0^{\infty}(\mathbb{R}^2)$ is essentially self-adjoint (see \cite{Weidmann1}, Section 17.2).\\
Hypothesis \ref{hyp5}.(a) in Thm. \ref{main_theorem} implies that $V_{\textnormal{reg}}\in \textnormal{\textsf{L}}^2_{\infty}(\mathbb{R}^2)$. Therefore, $V\chi_{\textnormal{supp}(U_0)}$ and $W:=V_{\textnormal{sing}}\left[V_{\textnormal{sing}}+2V_{\textnormal{reg}}\right]\chi_{\textnormal{supp}(U_0)}$ lie in $\textnormal{\textsf{L}}^2_{\infty}(\mathbb{R}^2)$, since  $V_{\textnormal{sing}}\chi_{\textnormal{supp}(U_0)}$ is bounded and vanishes at infinity. Furthermore, $(V_{\textnormal{reg}})^2$ is $(\mbox{\small $-\Delta_{\mathbb{R}^2}$})$-compact (see Hypothesis \ref{hyp2} in Thm.~\ref{main_theorem}). Thus, both $V\chi_{\textnormal{supp}(U_0)}$ and $V^2\chi_{\textnormal{supp}(U_0)}=(V_{\textnormal{reg}})^2\chi_{\textnormal{supp}(U_0)}+W$ are relatively compact perturbations of $-\Delta_{\mathbb{R}^2}$ and the operator
\begin{align}
\mathfrak{V}^{\zeta}_{\pm}:=-\Delta_{\mathbb{R}^2}+\Big(\pm 2mV\chi_{\textnormal{supp}(U_0)}+(1-1/\zeta)V^2\chi_{\textnormal{supp}(U_0)}-\mbox{\footnotesize $\displaystyle\sum\limits_{n=0}^N$}|\nabla U_n|^2\Big)/(1-\zeta)
\end{align}
is self-adjoint and bounded from below for all $\zeta\in (0,1)$. Moreover, $\textnormal{\textsf{C}}_0^{\infty}(\mathbb{R}^2)$ is a core for $\mathfrak{V}^{\zeta}_{\pm}$. Obviously, $V\chi_{\mathbb{R}^2\setminus B_{\gamma}}\in\textnormal{\textsf{L}}^2_{\infty}(\mathbb{R}^2)$ and, therefore, the operator 
\begin{align}
\mbox{\small $\tilde{\mathfrak{W}}^{\zeta}_{\pm}:\textnormal{\textsf{C}}^{\infty}_0(\mathbb{R}^2\setminus\overline{B_{\gamma}})\rightarrow \textnormal{\textsf{L}}^2(\mathbb{R}^2\setminus B_{\gamma})\,,\quad  \psi\mapsto \left[-\Delta+\left[\pm 2mV+(1+1/\zeta)V^2\right]/(1+\zeta)\right]\psi $}
\end{align}
is bounded from below for all $\zeta\in (0,1)$. We denote its Friedrichs extension (see~\cite{Reed2}, Thm. X.23) by $\mathfrak{W}^{\zeta}_{\pm}$. A form core for $\mathfrak{W}^{\zeta}_{\pm}$ is given by $\textnormal{\textsf{C}}^{\infty}_0(\mathbb{R}^2\setminus\overline{B_{\gamma}})$.

In the following lemma, we reduce the problem to the study of negative eigenvalues of $\mathfrak{W}^{\zeta}_{\pm}$ and $\mathfrak{V}^{\zeta}_{\pm}$.

\begin{Lem}\label{lem_bounds_number_of_ev}
There exist $c\in\mathbb{N}$ such that for all $\zeta\in (0,1)$ and $\upsilon\in\mathbb{R}^-:=(-\infty,0)$ it holds that
\begin{align}\label{bounds_for_number_of_ev}
\mbox{\small $\displaystyle\sum\limits_{+,-}$} \mathcal{N}_{(-\infty,\upsilon)}\mbox{\footnotesize $\Big(\mathfrak{W}^{\zeta}_{\pm}\Big)$}\leq\mathcal{N}_{(-\infty,\upsilon)}(D^2-m^2\mathbb{I})\leq c+\displaystyle\sum\limits_{+,-}\mathcal{N}_{(-\infty,\upsilon)}\mbox{\footnotesize $\Big(\mathfrak{V}^{\zeta}_{\pm}\Big)$}\,.
\end{align}
\end{Lem}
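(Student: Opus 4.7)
The plan is to reduce the matrix eigenvalue count to two scalar Schr\"odinger eigenvalue counts via a squaring-plus-Young trick, together with the IMS localization afforded by the partition of unity $\{U_n\}_{n=0}^{N}$. Direct expansion, using that $\{-i\pmb{\sigma}\cdot\nabla,\sigma_3\}=0$, gives
\begin{equation*}
D^2-m^2\mathbb{I}=(D-m\sigma_3)^2+2mV\sigma_3,\qquad (D-m\sigma_3)^2=-\Delta\mathbb{I}+V^2+\{-i\pmb{\sigma}\cdot\nabla,V\}\,.
\end{equation*}
Applying the elementary bound $(1-\zeta)\|a\|^2+(1-1/\zeta)\|b\|^2\leq\|a+b\|^2\leq(1+\zeta)\|a\|^2+(1+1/\zeta)\|b\|^2$ with $a=-i\pmb{\sigma}\cdot\nabla\phi$ and $b=V\phi$ eliminates the first-order cross term $\{-i\pmb{\sigma}\cdot\nabla,V\}$ and yields two-sided quadratic-form bounds on $(D-m\sigma_3)^2$ by the scalar operators $(1\pm\zeta)(-\Delta)+(1\pm 1/\zeta)V^2$. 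The remaining matrix term $2mV\sigma_3$ is diagonal in the spin basis $v_+=(1,0)^{\intercal}$, $v_-=(0,1)^{\intercal}$ with eigenvalues $\pm 2mV$, which allows to split the problem into the two sign cases.

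\textbf{Left inequality.} For trial vectors $\psi=\phi_+v_++\phi_-v_-$ with $\phi_\pm\in\textnormal{\textsf{C}}_0^{\infty}(\mathbb{R}^2\setminus\overline{B_{\gamma}})\subset\mathfrak{C}\subset\mathscr{D}(D)$, the upper Young bound combined with the spin decomposition yields
\begin{equation*}
\langle\psi,(D^2-m^2\mathbb{I})\psi\rangle\leq(1+\zeta)\sum_{+,-}\langle\phi_\pm,\mathfrak{W}^{\zeta}_{\pm}\phi_\pm\rangle\,.
\end{equation*}
Choosing the $\phi_\pm$ in form-core approximations of the spectral subspaces of $\mathfrak{W}^{\zeta}_{\pm}$ below $\upsilon$ (of dimensions $\mathcal{N}_{(-\infty,\upsilon)}(\mathfrak{W}^{\zeta}_{\pm})$) produces a trial subspace for $D^2-m^2\mathbb{I}$ in $\mathscr{D}(D)$ of total dimension $\sum_{+,-}\mathcal{N}_{(-\infty,\upsilon)}(\mathfrak{W}^{\zeta}_{\pm})$ (the two signs contribute orthogonally in $\textnormal{\textsf{L}}^2(\mathbb{R}^2,\mathbb{C}^2)$ since $v_+\perp v_-$) on which the form value is at most $(1+\zeta)\upsilon<\upsilon$ using $\upsilon<0$. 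Min-max yields the lower estimate.

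\textbf{Right inequality.} IMS gives $D^2-m^2\mathbb{I}=\sum_n U_n(D^2-m^2\mathbb{I})U_n-\sum_n|\nabla U_n|^2$. Combined with $\sum_n U_n^2=1$ this rewrites, for $\psi\in\mathscr{D}(D)$, as
\begin{equation*}
\langle\psi,(D^2-m^2\mathbb{I})\psi\rangle=\sum_n \Big\langle U_n\psi,\Big(D^2-m^2\mathbb{I}-\textstyle\sum_j|\nabla U_j|^2\Big)U_n\psi\Big\rangle\,.
\end{equation*}
On $\textnormal{supp}(U_0)$ the potential $V$ is regular; the lower Young bound together with the spin decomposition $U_0\psi=(U_0\psi)^+v_++(U_0\psi)^-v_-$ shows that the $n=0$ contribution is bounded below by $(1-\zeta)\sum_{+,-}\langle(U_0\psi)^\pm,\mathfrak{V}^{\zeta}_{\pm}(U_0\psi)^\pm\rangle$, the $-\sum|\nabla U_j|^2$ shift being absorbed by design into $\mathfrak{V}^{\zeta}_{\pm}$. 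For $n\geq 1$, let $\tilde A_n$ be the Friedrichs extension on $\textnormal{\textsf{L}}^2(B_\epsilon+x_n,\mathbb{C}^2)$ of the form $\phi\mapsto\langle\phi,(D^2-m^2\mathbb{I}-\sum_j|\nabla U_j|^2)\phi\rangle$, based on a core furnished by Lem.~\ref{Coulomb_Dirac_core}. The linear map
\begin{equation*}
\Phi:\psi\longmapsto\big((U_0\psi)^+,(U_0\psi)^-,U_1\psi,\ldots,U_N\psi\big)
\end{equation*}
is an isometry from $\mathscr{D}(D)$ into $\textnormal{\textsf{L}}^2(\mathbb{R}^2)^2\oplus\bigoplus_{n\geq 1}\textnormal{\textsf{L}}^2(B_\epsilon+x_n,\mathbb{C}^2)$. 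If $M\subset\mathscr{D}(D)$ has dimension $K$ and carries $D^2-m^2\mathbb{I}<\upsilon$, then on the $K$-dimensional $\Phi(M)$ the direct-sum operator $B:=(1-\zeta)(\mathfrak{V}^{\zeta}_{+}\oplus\mathfrak{V}^{\zeta}_{-})\oplus\bigoplus_{n\geq 1}\tilde A_n$ is $<\upsilon$. Min-max then gives
\begin{equation*}
K\leq\sum_{+,-}\mathcal{N}_{(-\infty,\upsilon/(1-\zeta))}(\mathfrak{V}^{\zeta}_{\pm})+\sum_{n\geq 1}\mathcal{N}_{(-\infty,\upsilon)}(\tilde A_n)\leq\sum_{+,-}\mathcal{N}_{(-\infty,\upsilon)}(\mathfrak{V}^{\zeta}_{\pm})+c\,,
\end{equation*}
where $c:=\sum_{n\geq 1}\mathcal{N}_{(-\infty,0)}(\tilde A_n)$, using $\upsilon/(1-\zeta)<\upsilon<0$ and monotonicity of $\mathcal{N}_{(-\infty,\cdot)}$.

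\textbf{Main obstacle.} The delicate point is showing that $c<\infty$, uniformly in $\zeta$ and $\upsilon$. This requires each local operator $\tilde A_n$ to have compact resolvent, which in turn relies on (i) the subcriticality $|\nu_n|<1/2$ providing a distinguished self-adjoint extension with form domain inside $\textnormal{\textsf{H}}^{1/2}$ (Prop.~\ref{dist_sa_ext} and Lem.~\ref{Coulomb_Dirac_core}), (ii) the boundedness of $V_{\textnormal{reg}}$ on the neighborhood $B_\epsilon+x_n$ of each point charge (Hyp.~\ref{hyp4}), and (iii) the compactness of the spatial domain $\overline{B_\epsilon+x_n}$. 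Together these force a discrete spectrum for each $\tilde A_n$ and thus a fixed, finite $c$.
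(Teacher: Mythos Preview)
Your strategy coincides with the paper's: square $D$, use Young's inequality on $\|(-i\pmb{\sigma}\cdot\nabla+V)\phi\|^2$ to kill the cross term, diagonalize $2mV\sigma_3$ in spin, and IMS-localize via $\{U_n\}$. The lower bound is handled the same way in both arguments.

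For the upper bound, the treatment of the singular pieces $n\geq 1$ differs. The paper first pays the finite-defect price of Lem.~\ref{finite_defect} to restrict the min--max to the explicit core $\mathfrak{C}$, then strips the bounded perturbations ($m\sigma_3$, $V_{\textnormal{reg}}$, the remaining Coulomb tails, $\sum_j|\nabla U_j|^2$) to reduce to $\|H^{\nu_n}_{\epsilon}\cdot\|^2-\tilde c_n$, and finally invokes Lem.~\ref{sa_ext_disc_sp} (proved by explicit angular-momentum analysis) for discreteness of the spectrum. Your route via the isometry $\Phi$ and the Friedrichs extensions $\tilde A_n$ is a legitimate and more streamlined alternative: it sidesteps both Lem.~\ref{finite_defect} and Lem.~\ref{sa_ext_disc_sp}, trading them for an abstract compactness argument. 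What it buys is a shorter proof that does not require constructing a specific boundary condition at $r=\vartheta$; what the paper's route buys is explicitness and no reliance on the closed-graph step.

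Two points in your sketch deserve to be made precise. First, your compact-resolvent argument (i)--(iii) implicitly needs the inclusion $\mathscr{D}(\tilde D)\hookrightarrow\textnormal{\textsf{H}}^{1/2}(\mathbb{R}^2,\mathbb{C}^2)$ of Prop.~\ref{dist_sa_ext} to be \emph{continuous} (it is, by the closed graph theorem), so that the form norm of $\tilde A_n$ controls the $\textnormal{\textsf{H}}^{1/2}$ norm and Rellich on the bounded domain $B_\epsilon+x_n$ can be applied. Second, for the isometry trick to work you need $U_n\psi$ to lie in the form domain of $\tilde A_n$ for every $\psi\in\mathscr{D}(D)$; this is clear if you take the initial form domain of $\tilde A_n$ to be $\{\phi\in\mathscr{D}(D):\textnormal{supp}\,\phi\subset B_\epsilon+x_n\}$ rather than the core of Lem.~\ref{Coulomb_Dirac_core}, but should be stated. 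With these two clarifications your argument is complete.
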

\begin{proof}
We first claim the existence of $\{\tilde{c}_n\}_{n=1}^N\in\big(\mathbb{R}^+\big)^N$ such that the inequality
\begin{align}\label{preliminary_claim}
\mbox{\small $\mu_{\mbox{\footnotesize $\displaystyle\sum\limits_{n=0}^Ns_n+N$}}\mbox{\footnotesize $\Big($}D^2-m^2\mathbb{I}\mbox{\footnotesize $\Big)$}\geq (1-\zeta)\min\left\{0,{\mu}_{s_0}\left(\mathfrak{V}^{\zeta}_+\oplus \mathfrak{V}^{\zeta}_-\right)\right\}+\frac{1}{2}\displaystyle\sum\limits_{n=1}^N\min\left\{0,\mu_{s_n}\mbox{\footnotesize $\Big($}\big(\mbox{\footnotesize $H_{\epsilon}^{\nu_n}$}\big)^2\mbox{\footnotesize $\Big)$}-\tilde{c}_n\right\}$}
\end{align}
holds for all $\zeta\in (0,1)$ and $\{s_n\}_{n=0}^N\in \mathbb{N}^{N+1}$.

One easily checks that the IMS localization formula for the quadratic form associated to $D^2-m^2\mathbb{I}$,
\begin{align}
\begin{split}
\|D\cdot\|^2-m^2\|\cdot\|^2 & =\mbox{\small $\displaystyle\sum\limits_{n=0}^N$}\mathfrak{v}[ U_n\hspace{0.1cm}\cdot\hspace{0.1cm}]\,,\\
\mathfrak{v}:\mathscr{D}(D)\rightarrow\mathbb{R}\,,&\quad \psi\mapsto\|D\psi\|^2-m^2\|\psi\|^2-\mbox{\small $\displaystyle\sum\limits_{j=0}^N$}\Big\||\nabla U_j|\psi\Big\|^2\,,
\end{split}
\end{align}
holds, and therefore Lem. \ref{finite_defect} implies that 
\begin{align}\label{loc_est_1}
\mu_{\mbox{\footnotesize $\displaystyle\sum\limits_{n=0}^Ns_n+N$}}\left(D^2-m^2\mathbb{I}\right)\geq \sup\limits_{M\subset \textnormal{\scriptsize \textsf{L}\normalsize}^2(\mathbb{R}^2,\mathbb{C}^2)\atop \textnormal{dim}(\textnormal{span}(M))\leq \mbox{\scriptsize $\displaystyle\sum\limits_{n=0}^Ns_n-N-1$}}\inf\limits_{\psi\in \mathfrak{C}\cap M^{\perp}\atop \|\psi\|=1}\displaystyle\sum\limits_{j=0}^N\mathfrak{v}[U_j\psi]\,.
\end{align}
The estimate 
\begin{align}
\mbox{\small $\sup\limits_{M\subset \textnormal{\tiny \textsf{L}\normalsize}^2(\mathbb{R}^2,\mathbb{C}^2)\atop \textnormal{dim}(\textnormal{span}(M))\leq \mbox{\scriptsize $\displaystyle\sum\limits_{n=0}^Ns_n-N-1$}}\inf\limits_{\psi\in \mathfrak{C}\cap M^{\perp}\atop \|\psi\|=1}\displaystyle\sum\limits_{j=0}^N\mathfrak{v}[ U_j\psi]  \geq \displaystyle\sum\limits_{n=0}^N\sup\limits_{M_n\subset \textnormal{\tiny \textsf{L}\normalsize}^2(\mathbb{R}^2,\mathbb{C}^2)\atop \textnormal{dim}(\textnormal{span}(M_n))\leq s_n-1}\inf\limits_{\psi\in \mathfrak{C}\cap M^{\perp}_n\atop \|\psi\|=1}\mathfrak{v}[ U_n\psi]$}
\end{align}
is trivial. Partially following Evans et al~\cite{Evans} (inequality (21)), we obtain 
\begin{align}\label{loc_est_2}
\sup\limits_{M_n\subset \textnormal{\scriptsize \textsf{L}\normalsize}^2(\mathbb{R}^2,\mathbb{C}^2)\atop \textnormal{dim}(\textnormal{span}(M_n))\leq s_n-1}\inf\limits_{\psi\in \mathfrak{C}\cap M_n^{\perp}\atop \|\psi\|=1}\mathfrak{v}[ U_n\psi]\geq \sup\limits_{M_n\subset \textnormal{\scriptsize \textsf{L}\normalsize}^2(\mathbb{R}^2,\mathbb{C}^2)\atop \textnormal{dim}(\textnormal{span}(M_n))\leq s_n-1}\inf\limits_{\psi\in \mathfrak{C}\cap (U_n\mathscr{A}_nM_n)^{\perp}\atop \|\psi\|=1}\mathfrak{v}[ U_n\psi]\,,
\end{align}
where $\mathscr{A}_n:\textnormal{\textsf{L}}^2(\mathbb{R}^2,\mathbb{C}^2)\rightarrow \textnormal{\textsf{L}}^2(\mathbb{R}^2,\mathbb{C}^2)\,,\textnormal{ }\psi\mapsto \psi(\cdot -x_n)\,$, where we set $x_0:=0$, and
\begin{align}\label{loc_est_3}
\mbox{\small $\inf\limits_{\psi\in\mathfrak{C}\cap(U_n\mathscr{A}_nM_n)^{\perp}\atop \|\psi\|=1}\mathfrak{v}[ U_n\psi]\geq\inf\limits_{\psi\in U_n(\mathfrak{C}\cap  (U_n\mathscr{A}_nM_n)^{\perp})\atop \|\psi\|\leq 1}\mathfrak{v}[\psi]
\geq \inf\limits_{\psi\in U_n\mathfrak{C}\cap \mathscr{A}_nM^{\perp}_n\atop \|\psi\|\leq 1}\mathfrak{v}[\psi]=\min\Bigg\{0,\inf\limits_{\psi\in U_n\mathfrak{C}\cap \mathscr{A}_nM^{\perp}_n\atop \|\psi\|= 1}\mathfrak{v}[\psi]\Bigg\}\,.$}
\end{align}
The second step in (\ref{loc_est_3}) follows from the inclusion $U_n\left(U_n\mathscr{A}_nM_n\right)^{\perp}\subset \mathscr{A}_nM^{\perp}_n$.\\
As $V\chi_{\textnormal{supp}(U_0)}\in \textnormal{\textsf{L}}^2_{\infty}(\mathbb{R}^2)$ (see Hypothesis \ref{hyp5}.(a) in Thm. \ref{main_theorem}), all $\psi\in \textnormal{\textsf{C}}_0^{\infty}(\mathbb{R}^2,\mathbb{C}^2)$ obey
\begin{align}\label{cs_inequality}
\begin{split}
|2\Re\left\langle -i\pmb{\sigma}\cdot \nabla\psi, V\chi_{\textnormal{supp}(U_0)}\psi\right\rangle | & \leq 2 \|\nabla\psi\| \|V\chi_{\textnormal{supp}(U_0)}\psi\|\leq \\
&\leq \left\langle\psi,\left(-\zeta\Delta+V^2\chi_{\textnormal{supp}(U_0)}/\zeta\right)\psi\right\rangle
\end{split}
\end{align}
(cf. \cite{Rademacher}, inequality (14)). If $\psi\equiv (\psi_1,\psi_2)\in U_0\mathfrak{C}\subset \textnormal{\textsf{C}}_0^{\infty}(\textnormal{supp}(U_0),\mathbb{C}^2)$, then, with $\psi=\chi_{\textnormal{supp}(U_0)}\psi$, (\ref{cs_inequality}) implies that
\begin{align}\label{cs_ineq_conseq}
\begin{split}
\mbox{\small $\|D\psi\|^2-m^2\|\psi\|^2$} & \geq \left\langle\psi,\left((\zeta-1)\Delta + 2mV\chi_{\textnormal{supp}(U_0)}\sigma_3+(1-1/\zeta)V^2\chi_{\textnormal{supp}(U_0)}\right)\psi\right\rangle\\
& = \mbox{\footnotesize $(1-\zeta)\left\langle \psi,\left(-\Delta+\left[2mV\chi_{\textnormal{supp}(U_0)}\sigma_3+(1-1/\zeta)V^2\chi_{\textnormal{supp}(U_0)}\right]/(1-\zeta)\right)\psi\right\rangle $}\\
& =\mbox{\small $(1-\zeta) \Big\langle \psi_1\oplus\psi_2,\left(\mathfrak{V}^{\zeta}_+\oplus\mathfrak{V}^{\zeta}_-\right)\psi_1\oplus\psi_2\Big\rangle + \displaystyle\sum_{n=0}^N\Big\||\nabla U_n|\psi\Big\|^2$}
\end{split}
\end{align}
(cf. \cite{Rademacher}, inequality (16)), which is equivalent to
\begin{align}\label{cs_ineq_concl}
\mathfrak{v}[\psi]\geq (1-\zeta)\Big\langle \psi_1\oplus\psi_2,\left(\mathfrak{V}^{\zeta}_+\oplus\mathfrak{V}^{\zeta}_-\right)\psi_1\oplus\psi_2\Big\rangle \,.
\end{align}
It follows from (\ref{loc_est_3}), (\ref{cs_ineq_concl}) and $U_0\mathfrak{C}\subset \textnormal{\textsf{C}}^{\infty}_0(\mathbb{R}^2,\mathbb{C}^2)$ that 
\begin{align}\label{bound_regular_part}
\mbox{\small $\inf\limits_{\psi\in\mathfrak{C}\cap (U_0M_0)^{\perp}\atop \|\psi\|=1}\mathfrak{v}[U_0\psi]\geq (1-\zeta)\min\Bigg\{0,\inf\limits_{\psi_1\oplus\psi_2\in \textnormal{\tiny \textsf{C}\normalsize}^{\infty}_0(\mathbb{R}^2,\mathbb{C}^2)\cap M_0^{\perp}\atop \|\psi_1\oplus\psi_2\|=1}\left\langle \psi_1\oplus\psi_2,\left(\mathfrak{V}^{\zeta}_+\oplus\mathfrak{V}^{\zeta}_-\right)\psi_1\oplus\psi_2\right\rangle\Bigg\}$}\,.
\end{align}

Now, let $n\in\{1,\dots,N\}$. Hypothesis \ref{hyp4} in Thm. \ref{main_theorem} guarantees the existence of $c^{\prime}_n\in\mathbb{R}^+$ satisfying
\begin{align}
\mbox{\small $\|D\psi\|^2\geq \left(\left\|\left(-i\pmb{\sigma}\cdot\nabla+\nu_n|\cdot-x_n|^{-1}\right)\psi\right\|^2-c^{\prime}_n\|\psi\|^2\right)/2=\left(\|D_{\epsilon}^{\nu_n}\mathscr{A}_n^*\psi\|^2-c^{\prime}_n\|\psi\|^2\right)/2$}
\end{align}
for all $\psi\in U_n\mathfrak{C}\subset\textnormal{\textsf{C}}^{\infty}_0((B_{\epsilon}\setminus\{0\})+x_n)=\mathscr{A}_n\mathscr{D}\big(D^{\nu_n}_{\epsilon}\big)$. As $D^{\nu_n}_{\epsilon}\subset H^{\nu_n}_{\epsilon}$, it follows that\vspace{3mm}
\begin{align}\label{est_by_extension}
\mbox{\footnotesize $\inf\limits_{\psi\in  U_n\mathfrak{C}\cap \mathscr{A}_nM^{\perp}_n\atop \|\psi\|=1}\mathfrak{v}[\psi]\geq \frac{1}{2}\inf\limits_{\psi\in  \mathscr{A}_n\mbox{\footnotesize $\big($}\mathscr{D}\mbox{\scriptsize $\big($}D^{\nu_n}_{\epsilon}\mbox{\scriptsize $\big)$}\cap M^{\perp}_n\mbox{\footnotesize $\big)$}\atop \|\psi\|=1}\Big(\|D_{\epsilon}^{\nu_n}\mathscr{A}^*_n\psi\|^2-\tilde{c}_n\Big)\geq \frac{1}{2}\inf\limits_{\psi\in \mathscr{D}\mbox{\scriptsize $\big($}\mbox{\tiny $H_{\epsilon}^{\nu_n}$}\mbox{\scriptsize $\big)$}\cap M^{\perp}_n\atop \|\psi\|=1}\Big(\|H_{\epsilon}^{\nu_n}\psi\|^2-\tilde{c}_n\Big)$}
\end{align}
holds for some $\tilde{c}_n\in\mathbb{R}^+$. Plugging (\ref{est_by_extension}) into (\ref{loc_est_3}), we obtain
\begin{align}\label{bound_singular_part}
\inf\limits_{\psi\in\mathfrak{C}\cap (U_n\mathscr{A}_nM_n)^{\perp}\atop \|\psi\|=1}\mathfrak{v}[U_n\psi]\geq  \frac{1}{2}\min\Bigg\{0,\inf\limits_{\psi\in \mathscr{D}\mbox{\footnotesize $\big($}\mbox{\scriptsize $H_{\epsilon}^{\nu_n}$}\mbox{\footnotesize $\big)$}\cap M^{\perp}_n\atop \|\psi\|=1}\Big(\|H_{\epsilon}^{\nu_n}\psi\|^2-\tilde{c}_n\Big)\Bigg\}\,.
\end{align}
Then, our preliminary claim (inequality (\ref{preliminary_claim})) follows from (\ref{loc_est_1})-(\ref{loc_est_2}), (\ref{bound_regular_part}) and (\ref{bound_singular_part}).
With $s_0=\mathcal{N}_{(-\infty,\upsilon/(1-\zeta))}\mbox{\footnotesize $\Big(\mathfrak{V}^{\zeta}_+\oplus\mathfrak{V}^{\zeta}_-\Big)$}+1$ and $s_n=\mathcal{N}_{(-\infty,\tilde{c}_n)}\big(\big(\mbox{\footnotesize $H_{\epsilon}^{\nu_n}$}\big)^2\big)+1$, the right side - and thus the left side - of (\ref{preliminary_claim}) is bounded from below by $\upsilon$ and hence
\begin{align}
\mathcal{N}_{(-\infty,\upsilon)}\mbox{\footnotesize $\Big($}D^2-m^2\mathbb{I}\mbox{\footnotesize $\Big)$}\leq 2N+\mathcal{N}_{(-\infty,\upsilon/(1-\zeta))}\mbox{\footnotesize $\Big(\mathfrak{V}^{\zeta}_+\oplus\mathfrak{V}^{\zeta}_-\Big)$}+\displaystyle\sum\limits_{n=1}^N\mathcal{N}_{(-\infty,\tilde{c}_n)}\mbox{\footnotesize $\Big($}\big(\mbox{\footnotesize $H_{\epsilon}^{\nu_n}$}\big)^2\mbox{\footnotesize $\Big)$}
\end{align}
holds. As the spectra of $\left\{H^{\nu_n}_{\epsilon}\right\}_{n=1}^N$ are discrete (see Lem.~\ref{sa_ext_disc_sp}), $\mathcal{N}_{(-\infty,\tilde{c}_n)}\big(\big(\mbox{\footnotesize $H_{\epsilon}^{\nu_n}$}\big)^2\big)$ is finite for all $n\in\{1,\dots,N\}$. Then, the upper bound in (\ref{bounds_for_number_of_ev}) follows from 
\begin{align}
\mathcal{N}_{(-\infty,\upsilon/(1-\zeta))}\mbox{\footnotesize $\Big(\mathfrak{V}^{\zeta}_+\oplus\mathfrak{V}^{\zeta}_-\Big)$}\leq \mathcal{N}_{(-\infty,\upsilon)}\mbox{\footnotesize $\Big(\mathfrak{V}^{\zeta}_+\oplus\mathfrak{V}^{\zeta}_-\Big)$}= \displaystyle\sum\limits_{+,-}\mathcal{N}_{(-\infty,\upsilon)}\mbox{\footnotesize $\Big(\mathfrak{V}^{\zeta}_{\pm}\Big)$}\,.
\end{align}

As for the lower bound, by the Min-Max principle, the eigenvalues of $D^2-m^2\mathbb{I}$ are bounded from above by those of the Friedrichs extension of
\begin{align}
\textnormal{\textsf{C}}_0^{\infty}(\mathbb{R}^2\setminus \overline{B_{\gamma}},\mathbb{C}^2)\rightarrow \textnormal{\textsf{L}}^2(\mathbb{R}^2\setminus B_{\gamma},\mathbb{C}^2)\,,\quad \psi\mapsto (D^2-m^2\mathbb{I})\psi\,.
\end{align}
As in (\ref{cs_ineq_conseq}), we estimate for all $\psi\equiv (\psi_1,\psi_2)\in \textnormal{\textsf{C}}_0^{\infty}(\mathbb{R}^2\setminus\overline{B_{\gamma}},\mathbb{C}^2)$
\begin{align}
\begin{split}
\|D\psi\|^2-m^2\|\psi\|^2 & \leq \left\langle\psi,\left(-(1+\zeta)\Delta + 2mV\sigma_3+(1+1/\zeta)V^2\right)\psi\right\rangle\\
& = (1+\zeta)\left\langle \psi,\left(-\Delta+\left[2mV\sigma_3+(1+1/\zeta)V^2\right]/(1+\zeta)\right)\psi\right\rangle \\
& = (1+\zeta) \Big\langle \psi_1\oplus\psi_2,\left(\mathfrak{W}^{\zeta}_+\oplus\mathfrak{W}^{\zeta}_-\right)\psi_1\oplus\psi_2\Big\rangle 
\end{split}
\end{align}
(cf. \cite{Rademacher}, inequality (15)). Then, the lower bound in (\ref{bounds_for_number_of_ev}) follows from
\begin{align}
\mathcal{N}_{(-\infty,\upsilon/(1+\zeta))}\mbox{\footnotesize $\Big(\mathfrak{W}^{\zeta}_+\oplus\mathfrak{W}^{\zeta}_-\Big)$}\geq \mathcal{N}_{(-\infty,\upsilon)}\mbox{\footnotesize $\Big(\mathfrak{W}^{\zeta}_+\oplus\mathfrak{W}^{\zeta}_-\Big)$}= \displaystyle\sum\limits_{+,-}\mathcal{N}_{(-\infty,\upsilon)}\mbox{\footnotesize $\Big(\mathfrak{W}^{\zeta}_{\pm}\Big)$}\,.
\end{align}
\end{proof}

At the expense of a bounded and compactly supported localization error, the negative eigenvalues of Schr{\"o}dinger operators defined in $\textnormal{\textsf{L}}^2(\mathbb{R}^2)$ with pure long range dipole potentials can be bounded from below by those of Schr{\"o}dinger operators defined in $\textnormal{\textsf{L}}^2(\mathbb{R}^2\setminus\overline{B_{\gamma}})$ with pure dipole potentials (see below). The latter accumulate exponentially fast at the bottom of the essential spectrum (see~\cite{Rademacher}, Lem.~1). To decouple the interior from the exterior part, we make use of a further partition of unity $(\tilde{U}_{\textnormal{int}},\tilde{U}_{\textnormal{ext}})\in \textnormal{\textsf{C}}^{\infty}_0(B_{2\gamma},[0,1])\times\textnormal{\textsf{C}}^{\infty}(\mathbb{R}^2\setminus \overline{B_{\gamma}},[0,1])$ with $(\tilde{U}_{\textnormal{int}})^2+(\tilde{U}_{\textnormal{ext}})^2=1$.

\begin{Lem}\label{pure_dipole_lem}
Let $\mathfrak{c}\in\mathbb{R}^2\setminus\{0\}$ and $\mbox{\small $-\Delta_{\mathbb{R}^2\setminus\overline{B_{\gamma}}}^{\textnormal{D}}$}$ be the Dirichlet-Laplacian for $\mathbb{R}^2\setminus\overline{B_{\gamma}}$ (see~\cite{Reed4}, Section~XIII.15). Then it holds for all $\upsilon\in\mathbb{R}^-$ that
\begin{align}\label{pure_dipole_ineq}
\mathcal{N}_{(-\infty,\upsilon)}\left(-\Delta_{\mathbb{R}^2}+\chi_{\mathbb{R}^2\setminus B_{\gamma}}\langle\mathfrak{c},\cdot\rangle_{\mathbb{R}^2}/|\cdot|^3+\mathscr{L}^{\mathfrak{c}}_{\gamma}\right)\leq\mathcal{N}_{(-\infty,\upsilon)}\left(\mbox{\small $-\Delta_{\mathbb{R}^2\setminus\overline{B_{\gamma}}}^{\textnormal{D}}$}+\langle\mathfrak{c},\cdot\rangle_{\mathbb{R}^2}/|\cdot|^3\right)\,,
\end{align}
where $\mathscr{L}^{\mathfrak{c}}_{\gamma}:=\chi_{B_{2\gamma}\setminus B_{\gamma}} |\langle\mathfrak{c},\cdot\rangle_{\mathbb{R}^2}|/|\cdot|^3+|\nabla\tilde{U}_{\textnormal{int}}|^2+|\nabla\tilde{U}_{\textnormal{ext}}|^2$ is the localization error.
\end{Lem}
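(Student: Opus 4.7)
The plan is to IMS-localize the left-hand side operator in (\ref{pure_dipole_ineq}) with the partition of unity $(\tilde U_{\textnormal{int}},\tilde U_{\textnormal{ext}})$ introduced just before the lemma: the two summands making up $\mathscr{L}^{\mathfrak{c}}_\gamma$ are engineered precisely so that this localization produces a decoupled sum of an interior piece, supported in $B_{2\gamma}$, which will turn out to be non-negative, and an exterior piece on $\textnormal{\textsf{L}}^2(\mathbb{R}^2\setminus\overline{B_\gamma})$ which lies form-above the operator on the right-hand side of (\ref{pure_dipole_ineq}). A min-max comparison will then deliver the desired counting-function inequality.

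Concretely, I would first use $(\tilde U_{\textnormal{int}})^2+(\tilde U_{\textnormal{ext}})^2=1$ together with the IMS identity $\sum_\alpha\|\nabla(\tilde U_\alpha\psi)\|^2=\|\nabla\psi\|^2+\sum_\alpha\||\nabla\tilde U_\alpha|\psi\|^2$ to rewrite, on a form core,
\begin{equation*}
\left\langle\psi,\left(-\Delta_{\mathbb{R}^2}+\chi_{\mathbb{R}^2\setminus B_\gamma}\langle\mathfrak{c},\cdot\rangle_{\mathbb{R}^2}/|\cdot|^3+\mathscr{L}^{\mathfrak{c}}_\gamma\right)\psi\right\rangle=\sum_{\alpha\in\{\textnormal{int},\textnormal{ext}\}}\left\langle\tilde U_\alpha\psi,\mathcal{A}\,\tilde U_\alpha\psi\right\rangle,
\end{equation*}
with $\mathcal{A}:=-\Delta+\chi_{\mathbb{R}^2\setminus B_\gamma}\langle\mathfrak{c},\cdot\rangle_{\mathbb{R}^2}/|\cdot|^3+\chi_{B_{2\gamma}\setminus B_\gamma}|\langle\mathfrak{c},\cdot\rangle_{\mathbb{R}^2}|/|\cdot|^3$. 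On $\textnormal{supp}(\tilde U_{\textnormal{int}})\subset B_{2\gamma}$ the potential part of $\mathcal{A}$ collapses to the non-negative function $\chi_{B_{2\gamma}\setminus B_\gamma}(\langle\mathfrak{c},\cdot\rangle_{\mathbb{R}^2}+|\langle\mathfrak{c},\cdot\rangle_{\mathbb{R}^2}|)/|\cdot|^3$, so the $\textnormal{int}$-summand is non-negative. The partition-of-unity condition combined with the support property of $\tilde U_{\textnormal{ext}}$ forces $\tilde U_{\textnormal{int}}\equiv 1$ on $\overline{B_\gamma}$, so $\tilde U_{\textnormal{ext}}$ vanishes in a neighborhood of $\partial B_\gamma$ and hence $\tilde U_{\textnormal{ext}}\psi$ belongs to the form domain of $-\Delta^{\textnormal{D}}_{\mathbb{R}^2\setminus\overline{B_\gamma}}$; on this support $\mathcal{A}$ equals $-\Delta+\langle\mathfrak{c},\cdot\rangle_{\mathbb{R}^2}/|\cdot|^3$ plus a non-negative term.

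Since $\psi\mapsto(\tilde U_{\textnormal{int}}\psi,\tilde U_{\textnormal{ext}}\psi)$ is norm-preserving into $\textnormal{\textsf{L}}^2(B_{2\gamma})\oplus\textnormal{\textsf{L}}^2(\mathbb{R}^2\setminus\overline{B_\gamma})$, the min-max principle yields
\begin{equation*}
\mathcal{N}_{(-\infty,\upsilon)}\left(-\Delta_{\mathbb{R}^2}+\chi_{\mathbb{R}^2\setminus B_\gamma}\langle\mathfrak{c},\cdot\rangle_{\mathbb{R}^2}/|\cdot|^3+\mathscr{L}^{\mathfrak{c}}_\gamma\right)\leq \mathcal{N}_{(-\infty,\upsilon)}(\mathcal{A}_{\textnormal{int}})+\mathcal{N}_{(-\infty,\upsilon)}(\mathcal{A}_{\textnormal{ext}}),
\end{equation*}
where $\mathcal{A}_{\textnormal{int}}$ and $\mathcal{A}_{\textnormal{ext}}$ denote the two pieces. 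The first term vanishes since $\upsilon<0$ and $\mathcal{A}_{\textnormal{int}}\geq 0$, and the further form-inequality $\mathcal{A}_{\textnormal{ext}}\geq -\Delta^{\textnormal{D}}_{\mathbb{R}^2\setminus\overline{B_\gamma}}+\langle\mathfrak{c},\cdot\rangle_{\mathbb{R}^2}/|\cdot|^3$ combined with min-max monotonicity collapses the exterior count to the right-hand side of (\ref{pure_dipole_ineq}). The main technical nuisance I anticipate is the careful justification that the IMS manipulation and the subsequent form comparisons transfer rigorously between the Hilbert spaces $\textnormal{\textsf{L}}^2(\mathbb{R}^2)$ and $\textnormal{\textsf{L}}^2(\mathbb{R}^2\setminus\overline{B_\gamma})$ — in particular verifying that $\textnormal{\textsf{C}}^\infty_0(\mathbb{R}^2\setminus\overline{B_\gamma})$ is a form core for the exterior operator, so that the test functions $\tilde U_{\textnormal{ext}}\psi$ genuinely probe the Dirichlet operator appearing on the right of (\ref{pure_dipole_ineq}).
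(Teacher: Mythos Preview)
Your proposal is correct and follows essentially the same route as the paper: IMS-localize with $(\tilde U_{\textnormal{int}},\tilde U_{\textnormal{ext}})$, observe that the interior piece is non-negative because on $\textnormal{supp}(\tilde U_{\textnormal{int}})\subset B_{2\gamma}$ the potential reduces to $\chi_{B_{2\gamma}\setminus B_\gamma}(\langle\mathfrak{c},\cdot\rangle_{\mathbb{R}^2}+|\langle\mathfrak{c},\cdot\rangle_{\mathbb{R}^2}|)/|\cdot|^3\geq 0$, and then compare the exterior piece with the Dirichlet operator via min-max. The paper packages the final min-max comparison through the $(\tilde U_{\textnormal{ext}}M)^\perp$ device already used in (\ref{loc_est_2})--(\ref{loc_est_3}) rather than through your direct-sum isometry, but the two formulations are equivalent.
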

\begin{proof}
Let $M\subset\textnormal{\textsf{L}}^2(\mathbb{R}^2)$. As in (\ref{loc_est_3}), we estimate using the IMS localization formula for Schr{\"o}dinger operators (see \cite{Cycon}, Thm. 3.2)
\begin{align}
\begin{split}
&\inf\limits_{\psi\in \textnormal{\scriptsize\textsf{C}\small}_0^{\infty}(\mathbb{R}^2)\cap (\tilde{U}_{\textnormal{ext}}M)^{\perp}\atop \|\psi\|=1} \left\langle\psi,\left(-\Delta+\chi_{\mathbb{R}^2\setminus B_{\gamma}}\langle\mathfrak{c},\cdot\rangle_{\mathbb{R}^2}/|\cdot|^3+\mathscr{L}^{\mathfrak{c}}_{\gamma}\right)\psi\right\rangle\\
&=  \inf\limits_{\psi\in \textnormal{\scriptsize\textsf{C}\normalsize}_0^{\infty}(\mathbb{R}^2)\cap (\tilde{U}_{\textnormal{ext}}M)^{\perp}\atop \|\psi\|=1}\mbox{\small $\Big[\left\langle \tilde{U}_{\textnormal{ext}}\psi,\left(-\Delta+\chi_{\mathbb{R}^2\setminus B_{\gamma}}\langle\mathfrak{c},\cdot\rangle_{\mathbb{R}^2}/|\cdot|^3+\chi_{B_{2\gamma}\setminus B_{\gamma}} |\langle\mathfrak{c},\cdot\rangle_{\mathbb{R}^2}|/|\cdot|^3\right)\tilde{U}_{\textnormal{ext}}\psi\right\rangle+$}\\
&\qquad\quad\quad\quad\quad\quad + \mbox{\small $\left\langle \tilde{U}_{\textnormal{int}}\psi,\left(-\Delta+\chi_{\mathbb{R}^2\setminus B_{\gamma}}\langle\mathfrak{c},\cdot\rangle_{\mathbb{R}^2}/|\cdot|^3+\chi_{B_{2\gamma}\setminus B_{\gamma}} |\langle\mathfrak{c},\cdot\rangle_{\mathbb{R}^2}|/|\cdot|^3\right)\tilde{U}_{\textnormal{int}}\psi\right\rangle\Big]$}\\\\
&\geq \inf\limits_{\psi\in \textnormal{\scriptsize\textsf{C}\normalsize}_0^{\infty}(\mathbb{R}^2)\cap (\tilde{U}_{\textnormal{ext}}M)^{\perp}\atop \|\psi\|=1}\left\langle \tilde{U}_{\textnormal{ext}}\psi,\left(-\Delta+\chi_{\mathbb{R}^2\setminus B_{\gamma}}\langle\mathfrak{c},\cdot\rangle_{\mathbb{R}^2}/|\cdot|^3\right)\tilde{U}_{\textnormal{ext}}\psi\right\rangle\\
&\geq  \inf\limits_{\psi\in \textnormal{\scriptsize\textsf{C}\normalsize}_0^{\infty}(\mathbb{R}^2\setminus \overline{B_{\gamma}})\cap M^{\perp}\atop \|\psi\|\leq 1}\left\langle\psi, \left(-\Delta+\langle\mathfrak{c},\cdot\rangle_{\mathbb{R}^2}/|\cdot|^3\right)\psi\right\rangle\,.
\end{split}
\end{align}
By an estimate similar to (\ref{loc_est_2}), we conclude that negative eigenvalues satisfy
\begin{align}
\mu_{s}\left(-\Delta_{\mathbb{R}^2}+\chi_{\mathbb{R}^2\setminus B_{\gamma}}\langle\mathfrak{c},\cdot\rangle_{\mathbb{R}^2}/|\cdot|^3+\mathscr{L}^{\mathfrak{c}}_{\gamma}\right)\geq \mu_{s}\left(\mbox{\small $-\Delta_{\mathbb{R}^2\setminus\overline{B_{\gamma}}}^{\textnormal{D}}$}+\langle\mathfrak{c},\cdot\rangle_{\mathbb{R}^2}/|\cdot|^3\right)\,,
\end{align}
which implies (\ref{pure_dipole_ineq}).
\end{proof}

Next, following Rademacher and Siedentop~\cite{Rademacher}, we decouple the pure dipole part from higher-order multipole moments, which - a posteriori - merely contribute with finitely many negative eigenvalues. For this purpose, we formulate the follo\-wing statement.

\begin{Lem}\label{kirsch_lem}
Suppose, $A_1$, $A_2$ and $A_3$ are lower semi-bounded self-adjoint operators in a Hilbert space with a common form core $K$ such that $\textnormal{inf}\sigma_{\textnormal{ess}}(A_j)\in\mathbb{R}^+_0$ for $j=1,2,3$ is satisfied and $A_1=A_2+A_3$ holds in the form sense on $K$, i.e., 
\begin{align}
\left\langle \psi,A_1\psi\right\rangle=\left\langle \psi,A_2\psi\right\rangle+\left\langle \psi,A_3\psi\right\rangle\quad\forall \psi\in K\,.
\end{align}
Then, it holds for all $\eta\in (0,1)$ and $\upsilon\in\mathbb{R}^-$ that
\begin{align}\label{kirsch_statement}
\mathcal{N}_{(-\infty,\upsilon)}\left(A_1\right)\leq \mathcal{N}_{(-\infty,(1-\eta)\upsilon)}\left(A_2\right)+\mathcal{N}_{(-\infty,\eta\upsilon)}\left(A_3\right)\,.
\end{align}
\end{Lem}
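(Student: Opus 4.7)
The plan is to prove this by a standard Min-Max / dimension-count argument, exploiting that the common form core $K$ allows the quadratic form identity $A_1 = A_2 + A_3$ to be tested on vectors simultaneously in the form domain of all three operators. Write $n_1 := \mathcal{N}_{(-\infty,\upsilon)}(A_1)$, $n_2 := \mathcal{N}_{(-\infty,(1-\eta)\upsilon)}(A_2)$, $n_3 := \mathcal{N}_{(-\infty,\eta\upsilon)}(A_3)$. Since $\upsilon$, $(1-\eta)\upsilon$ and $\eta\upsilon$ are all strictly negative while $\inf\sigma_{\textnormal{ess}}(A_j) \geq 0$, each of these quantities counts isolated eigenvalues of finite multiplicity below the essential spectrum, so the Min-Max principle is applicable in its standard form. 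We may of course assume $n_2, n_3 < \infty$; otherwise the inequality is trivial.

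The next step is to apply Min-Max to $A_1$: there exists a linear subspace $V \subset \mathscr{D}(A_1^{1/2})$ of dimension $n_1$ on which $\langle \psi, A_1 \psi\rangle < \upsilon \|\psi\|^2$ holds strictly for every $\psi\neq 0$. Since $K$ is a form core for $A_1$, by approximating a basis of $V$ in the $A_1$-form norm and using the openness of the strict inequality $\langle \cdot, A_1\, \cdot\rangle < \upsilon \|\cdot\|^2$ on the unit sphere of a finite-dimensional space, we may arrange $V \subset K$. In particular, $V$ lies in the form domains of both $A_2$ and $A_3$, and $\langle \psi, A_1 \psi\rangle = \langle \psi, A_2 \psi\rangle + \langle \psi, A_3 \psi\rangle$ for every $\psi \in V$.

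Now apply Min-Max to $A_2$ and $A_3$ in their "sup over finite-codimensional subspaces" formulation: there exist closed subspaces $W_2, W_3$ of codimensions $n_2, n_3$ respectively such that $\langle \psi, A_2\psi\rangle \geq (1-\eta)\upsilon \|\psi\|^2$ for all $\psi \in W_2 \cap \mathscr{D}(A_2^{1/2})$ and $\langle \psi, A_3\psi\rangle \geq \eta\upsilon \|\psi\|^2$ for all $\psi \in W_3\cap \mathscr{D}(A_3^{1/2})$ (one takes $W_j$ to be the orthogonal complement of the span of the eigenvectors below the cutoff). Suppose for contradiction that $n_1 > n_2 + n_3$; then $\dim(V \cap W_2 \cap W_3) \geq n_1 - n_2 - n_3 > 0$, so we may pick a unit vector $\psi$ in this intersection. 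Since $\psi \in V \subset K$ it lies in the form domains of both $A_2$ and $A_3$, hence adding the two lower bounds yields
\begin{equation*}
\langle \psi, A_1\psi\rangle = \langle \psi, A_2\psi\rangle + \langle \psi, A_3\psi\rangle \geq (1-\eta)\upsilon + \eta\upsilon = \upsilon,
\end{equation*}
contradicting the strict inequality $\langle \psi, A_1\psi\rangle < \upsilon$ inherited from $V$. Therefore $n_1 \leq n_2 + n_3$, which is exactly \eqref{kirsch_statement}.

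The main technical point — really the only obstacle — is the form-core approximation in the second paragraph: the Min-Max test space for $A_1$ a priori only lies in $\mathscr{D}(A_1^{1/2})$, which need not equal $\mathscr{D}(A_2^{1/2}) \cap \mathscr{D}(A_3^{1/2})$, so one must use the hypothesis that $K$ is a \emph{common} form core to slide $V$ into $K$ without destroying the strict form inequality. Once that is done, the dimension-count and additive form identity make the rest mechanical.
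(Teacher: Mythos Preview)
Your proposal is correct and is precisely the standard Min-Max/dimension-count argument of Kirsch--Simon that the paper invokes (the paper's own proof consists of a single sentence referring to Prop.~4 in~\cite{Kirsch}). Your write-up fills in the details the paper omits, including the form-core approximation step needed to slide the test subspace into $K$.
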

\begin{proof}
The statement follows by mimicking the proof of Prop. 4. in~\cite{Kirsch}.
\end{proof}

\begin{Rem}
We obtain for all $\xi\in (0,1)$ and $\upsilon^{\prime}\in\mathbb{R}^-$ the inequality
\begin{align}\label{kirsch_variant}
\mathcal{N}_{(-\infty,\upsilon^{\prime})}\left(A_2\right)\geq \mathcal{N}_{(-\infty,(1+\xi)\upsilon^{\prime})}\left(A_1\right)-\mathcal{N}_{(-\infty,\xi\upsilon^{\prime})}\left(A_3\right)
\end{align}
when we insert $\upsilon=(1+\xi)\upsilon^{\prime}$ and $\eta=(1+\xi)^{-1}\xi$ into (\ref{kirsch_statement}).
\end{Rem}

Let $\eta,\xi,\zeta\in (0,1)$. We decompose $\mathfrak{V}^{\zeta}_{\pm}$ into $\mathfrak{V}^{\zeta}_{\pm}=(1-\eta)\mathfrak{X}^{\zeta,\eta}_{\pm}+\eta\mathfrak{T}^{\zeta,\eta}_{\pm}$, where
\begin{align}
\mathfrak{X}^{\zeta,\eta}_{\pm}:=-\Delta_{\mathbb{R}^2}\pm \chi_{\mathbb{R}^2\setminus B_{\gamma}}\frac{2m}{(1-\zeta)(1-\eta)}\frac{\langle\mathfrak{d},\cdot\rangle_{\mathbb{R}^2}}{|\cdot|^3}+\mathscr{L}^{\pm 2m\mathfrak{d}/[(1-\zeta)(1-\eta)]}_{\gamma}
\end{align}
and
\begin{align}
\begin{split}
\mathfrak{T}^{\zeta,\eta}_{\pm}:=-\Delta_{\mathbb{R}^2}+\bigg(\pm 2m\left[V\chi_{\textnormal{supp}(U_0)}-\frac{\langle\mathfrak{d},\cdot\rangle_{\mathbb{R}^2}}{|\cdot|^3}\chi_{\mathbb{R}^2\setminus B_{\gamma}}\right]+(1-1/\zeta)V^2\chi_{\textnormal{supp}(U_0)}-\\
-\displaystyle\sum\limits_{n=0}^N|\nabla U_n|^2-(1-\eta)(1-\zeta)\mathscr{L}_{\gamma}^{\pm 2m\mathfrak{d}/[(1-\eta)(1-\zeta)]}\bigg)/[(1-\zeta)\eta]\,.
\end{split}
\end{align}
Since $V\chi_{\textnormal{supp}(U_0)}$ and $V^2\chi_{\textnormal{supp}}(U_0)$ are $(-\Delta_{\mathbb{R}^2})$-compact (see above), $\mathfrak{T}^{\zeta,\eta}_{\pm}$ is self-adjoint and bounded from below.\\
Using Lem.~\ref{kirsch_lem} and then Lem.~\ref{pure_dipole_lem}, we obtain for all $\upsilon\in\mathbb{R}^-$ that
\begin{align}\label{decouple_above}
\begin{split}
\mathcal{N}_{(-\infty,\upsilon)}\mbox{\small $\Big(\mathfrak{V}^{\zeta}_{\pm}\Big) $}& \leq \mathcal{N}_{(-\infty,\upsilon)}\mbox{\small $\Big(\mathfrak{X}^{\zeta,\eta}_{\pm}\Big) $}+\mathcal{N}_{(-\infty,\upsilon)}\mbox{\small $\Big(\mathfrak{T}^{\zeta,\eta}_{\pm}\Big) $}\\
& \leq \mathcal{N}_{(-\infty,\upsilon)}\mbox{\small $\Big(-\Delta_{\mathbb{R}^2\setminus \overline{B_{\gamma}}}^{\textnormal{D}}\pm\frac{2m}{(1-\zeta)(1-\eta)}\frac{\langle\mathfrak{d},\cdot\rangle_{\mathbb{R}^2}}{|\cdot|^3}\Big) $}+\mathcal{N}_{\mathbb{R}^-}\mbox{\small $\Big(\mathfrak{T}^{\zeta,\eta}_{\pm}\Big) $}\,.
\end{split}
\end{align}
We decompose $\mathfrak{W}^{\zeta}_{\pm}$ in a similar way. Let $\mathfrak{Z}^{\zeta,\xi}_{\pm}$ be the Friedrichs extension of
\begin{align}\label{rest_exterior}
\mbox{\footnotesize $\textnormal{\textsf{C}}^{\infty}_0(\mathbb{R}^2\setminus \overline{B_{\gamma}})\rightarrow\textnormal{\textsf{L}}^2(\mathbb{R}^2\setminus B_{\gamma})\,, \quad \psi\mapsto-\Delta\psi+\left[\mp 2m\left[V-\mbox{\small $\frac{\langle\mathfrak{d},\cdot\rangle_{\mathbb{R}^2}}{|\cdot|^3}$}\right]-(1+1/\zeta)V^2\right]\psi/[(1+\zeta)\xi]\,,$}
\end{align}
which is bounded from below, since $\chi_{\mathbb{R}^2\setminus B_{\gamma}}V$ and $\chi_{\mathbb{R}^2\setminus B_{\gamma}}V^2$ are $(-\Delta_{\mathbb{R}^2})$-compact (cf. above considerations).\\
Then, since 
\begin{align}
(1+\xi)\left(-\Delta^{\textnormal{D}}_{\mathbb{R}^2\setminus \overline{B_{\gamma}}}\pm \mbox{\small $\frac{2m}{(1+\zeta)(1+\xi)}\frac{\langle\mathfrak{d},\cdot\rangle_{\mathbb{R}^2}}{|\cdot|^3}$}\right)=\mathfrak{W}^{\zeta}_{\pm}+\xi\mathfrak{Z}^{\zeta,\xi}_{\pm}
\end{align}
holds in the form sense on the common form core $\textnormal{\textsf{C}}^{\infty}_0(\mathbb{R}^2\setminus \overline{B_{\gamma}})$, we obtain
\begin{align}\label{decouple_below}
\begin{split}
\mathcal{N}_{(-\infty,\upsilon)}\mbox{\small $\Big(\mathfrak{W}^{\zeta}_{\pm}\Big) $}& \geq \mathcal{N}_{(-\infty,\upsilon)}\mbox{\small $\Big(-\Delta_{\mathbb{R}^2\setminus \overline{B_{\gamma}}}^{\textnormal{D}}\pm\frac{2m}{(1+\zeta)(1+\xi)}\frac{\langle\mathfrak{d},\cdot\rangle_{\mathbb{R}^2}}{|\cdot|^3}\Big) $}-\mathcal{N}_{(-\infty,\upsilon)}\mbox{\small $\Big(\mathfrak{Z}^{\zeta,\xi}_{\pm}\Big) $}\\
& \geq \mathcal{N}_{(-\infty,\upsilon)}\mbox{\small $\Big(-\Delta_{\mathbb{R}^2\setminus \overline{B_{\gamma}}}^{\textnormal{D}}\pm\frac{2m}{(1+\zeta)(1+\eta)}\frac{\langle\mathfrak{d},\cdot\rangle_{\mathbb{R}^2}}{|\cdot|^3}\Big) $}-\mathcal{N}_{\mathbb{R}^-}\mbox{\small $\Big(\mathfrak{Z}^{\zeta,\xi}_{\pm}\Big) $}\,.
\end{split}
\end{align}
for all $\upsilon\in\mathbb{R}^-$ by using (\ref{kirsch_variant}).\\

As mentioned above, we now show that the higher-order multipole moments merely contribute with finitely many negative eigenvalues.

\begin{Lem}\label{finite_number_ev}
Let $\zeta,\eta,\xi\in (0,1)$. Then, $\mathcal{N}_{\mathbb{R}^-}\mbox{\small $\Big(\mathfrak{T}^{\zeta,\eta}_{\pm}\Big)$}$ and $\mathcal{N}_{\mathbb{R}^-}\mbox{\small $\Big(\mathfrak{Z}^{\zeta,\xi}_{\pm}\Big)$}$ are finite.
\end{Lem}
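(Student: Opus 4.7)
The plan is to reduce the claim to a Birman-Solomyak-Shargorodsky-type bound on the negative spectrum of a two-dimensional Schr\"odinger operator, asserting that if a real potential $W$ on $\mathbb{R}^2$ satisfies $W_-\in\textnormal{\textsf{L}}^1(\mathbb{R}^2;\log(2+|x|)\textnormal{d}x)$ together with $(W_-)_*\in\textnormal{\textsf{L}}^1(\mathbb{R}^+;\log_+(r^{-1})\textnormal{d}r)$, then $\mathcal{N}_{\mathbb{R}^-}(-\Delta_{\mathbb{R}^2}+W)<\infty$ (see~\cite{Shargorodsky}). The entire point of Hypothesis~\ref{hyp5} in Thm.~\ref{main_theorem} is to furnish exactly these integrability conditions for $R$ and $R^2$.

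For $\mathfrak{T}^{\zeta,\eta}_\pm$ the operator is already of the form $-\Delta_{\mathbb{R}^2}+W$. For $\mathfrak{Z}^{\zeta,\xi}_\pm$, I first pass to the whole plane: since imposing Dirichlet boundary conditions on $\partial B_\gamma$ only raises eigenvalues, the Min-Max principle gives $\mathcal{N}_{\mathbb{R}^-}\big(\mathfrak{Z}^{\zeta,\xi}_\pm\big)\le\mathcal{N}_{\mathbb{R}^-}\big(-\Delta_{\mathbb{R}^2}+\widetilde{W}^{\mathfrak{Z}}_\pm\big)$, where $\widetilde{W}^{\mathfrak{Z}}_\pm$ denotes the zero-extension of the potential appearing in~\eqref{rest_exterior}.

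Next, I decompose the negative part of each of the resulting potentials into three pieces: (i) a piece bounded above (up to a positive multiplicative constant) by $|R|+R^2$ on $\mathbb{R}^2\setminus B_\gamma$, obtained by recognising that on this set $V-\langle\mathfrak{d},\cdot\rangle/|\cdot|^3=R$ and hence $V^2=R^2+2R\langle\mathfrak{d},\cdot\rangle/|\cdot|^3+|\langle\mathfrak{d},\cdot\rangle|^2/|\cdot|^6$; (ii) a bounded, compactly supported piece collecting $V_{\textnormal{sing}}\chi_{\textnormal{supp}(U_0)}$ (bounded because $\textnormal{supp}(U_0)$ stays away from the $x_n$), $V_{\textnormal{reg}}$ on neighborhoods of the $x_n$ where it is bounded by Hypothesis~\ref{hyp4}, the cross terms $V_{\textnormal{sing}}V_{\textnormal{reg}}\chi_{\textnormal{supp}(U_0)}$, and the localization contributions $\sum_n|\nabla U_n|^2$ and $\mathscr{L}_\gamma$; and (iii) rapidly decaying tails from the dipole square $|\mathfrak{d}|^2|\cdot|^{-4}$ and the cross term, bounded on $\mathbb{R}^2\setminus B_\gamma$ by $\gamma^{-2}|\mathfrak{d}||R|$.

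Finally, I verify the Birman-Solomyak-Shargorodsky hypotheses piece by piece: (i) is controlled by Hypothesis~\ref{hyp5}; (ii) lies trivially in both weighted $\textnormal{\textsf{L}}^1$ spaces, being bounded with compact support; (iii) lies in $\textnormal{\textsf{L}}^1$ at infinity under any polynomial-growth weight while its spherical rearrangement vanishes near the origin, so both weighted $\textnormal{\textsf{L}}^1$ conditions hold trivially. Applying the Shargorodsky bound then produces finiteness in both cases. The principal obstacle is purely the bookkeeping in the middle step, namely matching every term appearing in the potentials of $\mathfrak{T}^{\zeta,\eta}_\pm$ and $\mathfrak{Z}^{\zeta,\xi}_\pm$ to a piece of $R$, $R^2$, a rapidly decaying tail, or a bounded compactly supported remainder; by design of Def.~\ref{eff_rest_pot}, however, this matching is essentially tautological.
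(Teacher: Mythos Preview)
Your proposal is correct and follows essentially the same route as the paper: both reduce to Shargorodsky's Theorem~4.3 for two-dimensional Schr\"odinger operators, invoke Hypothesis~\ref{hyp5} to secure the required weighted integrability of the potentials, and handle $\mathfrak{Z}^{\zeta,\xi}_\pm$ by extending by zero to $\mathbb{R}^2$ via the form-core inclusion (equivalently, Dirichlet raises eigenvalues). The paper is simply terser, asserting directly that $V\chi_{\textnormal{supp}(U_0)}-\langle\mathfrak{d},\cdot\rangle|\cdot|^{-3}\chi_{\mathbb{R}^2\setminus B_\gamma}$ and $V^2\chi_{\textnormal{supp}(U_0)}$ lie in the two weighted spaces, whereas you spell out the decomposition. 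One small bookkeeping slip: in your piece~(ii) the cross term $V_{\textnormal{sing}}V_{\textnormal{reg}}\chi_{\textnormal{supp}(U_0)\cap B_\gamma}$ need not be bounded, since $V_{\textnormal{reg}}$ is not assumed bounded on all of $B_\gamma$; however, $V_{\textnormal{sing}}$ is bounded there and $V_{\textnormal{reg}}=R$ on $B_\gamma$, so this term (and likewise $V_{\textnormal{reg}}^2\chi_{B_\gamma}=R^2\chi_{B_\gamma}$) is controlled by $C(|R|+R^2)$ and belongs in piece~(i) rather than~(ii), after which the argument goes through unchanged.
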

\begin{proof}
It follows from Hypothesis \ref{hyp5}.(a) in Thm. \ref{main_theorem} that $V\chi_{\textnormal{supp}(U_0)}-\mbox{\small $\frac{\langle\mathfrak{d},\cdot\rangle_{\mathbb{R}^2}}{|\cdot|^3}$}\chi_{\mathbb{R}^2\setminus B_{\gamma}}$ and $V^2\chi_{\textnormal{supp}(U_0)}$ - and thus also the potential of $\mathfrak{T}^{\zeta,\eta}_{\pm}$ - lie in $\textnormal{\textsf{L}}^1(\mathbb{R}^2;\log(2+|x|)\textnormal{d}x)$. Accordingly, Hypothesis \ref{hyp5}.(b) in Thm. \ref{main_theorem} implies that their spherical rearrangements are contained in $\textnormal{\textsf{L}}^1(\mathbb{R}^+;\log_+(r^{-1})\textnormal{d}r)$. Then, the finiteness of $\mathcal{N}_{\mathbb{R}^-}\mbox{\small $\Big(\mathfrak{T}^{\zeta,\eta}_{\pm}\Big)$}$ follows from Thm. 4.3 in~\cite{Shargorodsky}. The same applies to the zero extension of the potential in~(\ref{rest_exterior}) to $\mathbb{R}^2$. Hence - by the inclusion of form cores $\textnormal{\textsf{C}}^{\infty}_0(\mathbb{R}^2\setminus\overline{B_{\gamma}})\subset \textnormal{\textsf{C}}^{\infty}_0(\mathbb{R}^2)$ - Thm.~4.3 in~\cite{Shargorodsky} also implies that $\mathcal{N}_{\mathbb{R}^-}\mbox{\small $\Big(\mathfrak{Z}^{\zeta,\xi}_{\pm}\Big)$}$ is finite.
\end{proof}

We are now prepared for the proof of Thm. \ref{main_theorem}:

\begin{proof}
Let $\zeta,\eta,\xi\in (0,1)$. Using Lem. \ref{lem_bounds_number_of_ev} and \ref{finite_number_ev} and inequalities (\ref{decouple_above}) and (\ref{decouple_below}) , we estimate
\begin{align}
\begin{split}
& \limsup\limits_{E\nearrow m}\frac{\mathcal{N}_{(-\infty,E^2-m^2)}\mbox{\footnotesize $\Big($}D^2-m^2\mathbb{I}\mbox{\footnotesize $\Big)$}}{|\log(m-E)|}\leq\\
\leq & \displaystyle\sum\limits_{+,-}\limsup\limits_{E\nearrow m}\frac{\mathcal{N}_{(-\infty,E^2-m^2)}\Big(-\Delta_{\mathbb{R}^2\setminus\overline{B_{\gamma}}}^{\textnormal{D}}\pm \frac{2m}{(1-\zeta)(1-\eta)}\frac{\langle\mathfrak{d},\cdot\rangle_{\mathbb{R}^2}}{|\cdot|^3}\Big)}{|\log(m^2-E^2)|}\overbrace{\left|\frac{\log(m^2-E^2)}{\log(m-E)}\right|}^{\rightarrow 1}
\end{split}
\end{align}
and 
\begin{align}
\begin{split}
& \liminf\limits_{E\nearrow m}\frac{\mathcal{N}_{(-\infty,E^2-m^2)}\mbox{\footnotesize $\Big($}D^2-m^2\mathbb{I}\mbox{\footnotesize $\Big)$}}{|\log(m-E)|}\geq\\
\geq & \displaystyle\sum\limits_{+,-}\liminf\limits_{E\nearrow m}\frac{\mathcal{N}_{(-\infty,E^2-m^2)}\Big(-\Delta_{\mathbb{R}^2\setminus\overline{B_{\gamma}}}^{\textnormal{D}}\pm \frac{2m}{(1+\zeta)(1+\xi)}\frac{\langle\mathfrak{d},\cdot\rangle_{\mathbb{R}^2}}{|\cdot|^3}\Big)}{|\log(m^2-E^2)|}\overbrace{\left|\frac{\log(m^2-E^2)}{\log(m-E)}\right|}^{\rightarrow 1}\,.
\end{split}
\end{align}
Due to the continuity of $\mbox{\small $\textnormal{tr}\Big(\mbox{\small $\sqrt{(M_{(\cdot)})_-}$}\hspace{0.5mm}\Big)$}$ (see \cite{Rademacher}), the desired result follows from Lem. 1 in \cite{Rademacher} in the limits $\zeta,\eta\rightarrow 0$ and $\zeta,\xi\rightarrow 0$, respectively.
\end{proof}

\section*{Acknowledgements}
\noindent I thank Heinz Siedentop and Sergey Morozov for many useful discussions.

\section*{References}
\newcommand\oldsection{}
\let\oldsection=\section
\renewcommand{\section}[2]{}

\let\section=\oldsection

\end{document}